\newif\ifprivate
\newif\ifshort
\renewenvironment{abstract}{%
	\small
	\quotation
	\noindent{\bfseries\abstractname.}%
}{\endquotation}
\renewcommand\section{\@startsection {section}{1}{\z@}%
                                   {-2.5ex \@plus -1ex \@minus -.2ex}%
                                   {1.3ex \@plus.2ex}%
                                   {\normalfont\large\bfseries}}
\renewcommand\subsection{\@startsection{subsection}{2}{\z@}%
                                     {-2.25ex\@plus -1ex \@minus -.2ex}%
                                     {0.5ex \@plus .2ex}%
                                     {\normalfont\normalsize\bfseries}}
\renewcommand\subsubsection{\@startsection{subsubsection}{3}{\z@}%
                                     {-2.25ex\@plus -1ex \@minus -.2ex}%
                                     {0.5ex \@plus .2ex}%
                                     {\normalfont\normalsize\bfseries}}
\renewcommand\paragraph{\@startsection{paragraph}{4}{\z@}%
                                    {2.25ex \@plus1ex \@minus.2ex}%
                                    {-1em}%
                                    {\normalfont\normalsize\bfseries}}
\renewcommand\subparagraph{\@startsection{subparagraph}{5}{\parindent}%
                                       {2.25ex \@plus1ex \@minus .2ex}%
                                       {-1em}%
                                      {\normalfont\normalsize\bfseries}}
\pgfplotsset{major grid style={very thin,gray!20!white}} %
\newcommand{\xmark}{\ding{55}}%
\newcommand{\appref}[1]{\ifshort{}{\hyperref[proof:#1]{\appsymb}}\fi{}}
\newcommand{\appendixsection}[1]{%
	\ifshort{}\gappto{\appendixProofText}{\section{Additional Material for Section~\ref{#1}}\label{app:#1}}\fi{}
}
\newcommand{\toappendix}[1]{%
	\ifshort{}\gappto{\appendixProofText}
	{{
		#1
	}}\else{}#1\fi{}
}
\newcommand{\appendixproof}[2]{%
	\ifshort{}\gappto{\appendixProofText}
		{
			\subsection{Proof of \cref{#1}}\label{proof:#1}
			#2
		}\else{}#2\fi{}
}
\newcommandx{\set}[2][1=1]{\ensuremath{\{#1,\ldots,#2\}}}
\newcommandx{\tlog}[3][1=,3=]{\log_{#1}^{#3}(#2)}
\theoremstyle{plain}
\newtheorem{theorem}{Theorem}
\newtheorem{lemma}{Lemma}
\newtheorem{corollary}{Corollary}
\newtheorem{observation}{Observation}
\newtheorem{proposition}{Proposition}
\newtheorem{rrule}{Reduction Rule}
\theoremstyle{definition}
\newtheorem{definition}{Definition}
\newtheorem{problem}{Problem}
\declaretheorem[style=definition,name=Construction,qed=$\diamond$]{construction}
\theoremstyle{remark}
\newtheorem*{remark}{Remark}
\crefname{observation}{Observation}{Observations}
\crefname{rrule}{Reduction Rule}{Reduction Rules}
\crefname{construction}{Construction}{Constructions}
\crefname{proposition}{Proposition}{Propositions}
\crefname{theorem}{Theorem}{Theorems}
\crefname{corollary}{Corollary}{Corollaries}
\crefname{line}{Line}{Lines}
\crefname{problem}{Problem}{Problems}
\crefname{figure}{Figure}{Figures}
\crefname{subsection}{Section}{Sections}
\crefname{section}{Section}{Sections}
\Crefname{subsection}{Sec.}{Sects.}
\Crefname{section}{Sec.}{Sects.}
\Crefname{problem}{Prob.}{Probs.}
\Crefname{observation}{Obs.}{Obs.}
\Crefname{proposition}{Prop.}{Props.}
\Crefname{corollary}{Cor.}{Cors.}
\Crefname{theorem}{Thm.}{Thm.}
\newcommandx{\decprob}[6][3=Input,5=Question]{\begin{samepage}
  \begingroup
\begin{problem}\label{prob:#2}%
  {{\textsc{#1}}}
  \nopagebreak[4]\end{problem}\nopagebreak[4]\vspace{-0.5em}
  \par\noindent\hangindent=\parindent\textbf{#3}:  #4\nopagebreak[4]
  \par\noindent\hangindent=\parindent\textbf{#5}:  #6
  \par\medskip
  \endgroup
  \end{samepage}
}
\newcommand{\yes}{\textnormal{\texttt{yes}}}
\newcommand{\RD}{$(\Rightarrow)\quad$}
\newcommand{\LD}{$(\Leftarrow)\quad$}
\newcommandx{\decprobX}[5][2=Input,4=Question]{%
  \begingroup
  \par\medskip
  \noindent \colorbox{gray!17!white}{\textsc{#1}\index{problem!#1}}\nopagebreak[4]
  \par\noindent\hangindent=\parindent\textbf{#2}:  #3\nopagebreak[4]
  \par\noindent\hangindent=\parindent\textbf{#4}:  #5
  \par  \medskip
  \endgroup
}
\newcommand{\N}{\mathbb{N}}
\newcommand{\Nzero}{\mathbb{N}_0}
\newcommand{\calC}{\mathcal{C}}
\newcommand{\prob}[1]{\textnormal{\textsc{#1}}}
\newcommand{\gbp}{Green Bridges Placement}
\newcommand{\GBP}{\prob{GBP}}
\newcommandx{\RgbpTsc}[2][1=$1$]{\prob{#1-Reach \gbp}}
\newcommandx{\RgbpAcr}[2][1=$1$]{\prob{#1-Reach \GBP}}
\newcommandx{\RwgbpTsc}[2][1=$1$]{\prob{#1-Reach \gbp{} with Costs}}
\newcommandx{\RwgbpAcr}[2][1=$1$]{\prob{#1-Reach \GBP-C}}
\newcommandx{\CgbpTsc}[2][1=$d$]{\prob{#1-Closed \gbp}}
\newcommandx{\CgbpAcr}[2][1=$d$]{\prob{#1-Closed \GBP}}
\newcommandx{\DgbpTsc}[2][1=$d$]{\prob{#1-Diamater \gbp}}
\newcommandx{\DgbpAcr}[2][1=$d$]{\prob{#1-Diam \GBP}}
\newcommand{\mcmTsc}{\prob{Maximum-Weight Matching}}
\newcommand{\mcmAcr}{\prob{MWM}}
\newcommand{\mchmTsc}{\prob{Maximum-Weight Hypergraph Matching}}
\newcommand{\mchmAcr}{\prob{MWHM}}
\newcommand{\cvcTsc}{\prob{Cubic Vertex Cover}}
\newcommand{\cvcAcr}{\prob{CVC}}
\newcommand{\vcAcr}{\prob{VC}}
\newcommand{\cocl}[1]{\ensuremath{\operatorname{#1}}}
\newcommand{\classP}{\cocl{P}}
\newcommand{\NP}{\cocl{NP}}
\newcommand{\calA}{\mathcal{A}}
\newcommand{\calH}{\mathcal{H}}
\newcommandx{\tref}[2][1=]{{\scriptsize{(\Cref{#2}#1)}}}
\newcommand{\ceq}{\ensuremath{\coloneqq}}
\newcommand{\crown}{crown}
\newcommand{\pqcrown}[2]{$(#1,#2)$-\crown}
\definecolor{lilla}{HTML}{750787}
\newcommand{\thecolor}{black}%
\newcommand{\ExternalLink}{%
\tikz[x=1.2ex, y=1.2ex, baseline=-0.05ex]{%
    \begin{scope}[x=1ex, y=1ex]
        \clip (-0.1,-0.1) --++ (-0, 1.2) --++ (0.6, 0) --++ (0, -0.6) --++ (0.6, 0) --++ (0, -1);
        \path[draw, line width = 0.5, rounded corners=0.5] (0,0) rectangle (1,1);
    \end{scope}
    \path[draw, line width = 0.5] (0.5, 0.5) -- (1, 1);
    \path[draw, line width = 0.5] (0.6, 1) -- (1, 1) -- (1, 0.6);
    }
}
\newcommand{\tikzpreamble}{%
  \tikzstyle{xnode}=[circle,fill,scale=0.5,draw,color=\thecolor]
  \tikzstyle{xedge}=[thick,-,color=\thecolor]
  \tikzstyle{xxedge}=[ultra thick,-,color=\thecolor]
  \tikzstyle{xedgedot}=[thick,-,dotted,color=white]
  \tikzstyle{xhabA}=[-,opacity=0.125, line width=9pt, line cap=round,color=magenta]
  \tikzstyle{xhabB}=[-,opacity=0.125, line width=9pt, line cap=round,color=green]
  \tikzstyle{xhabC}=[-,opacity=0.125, line width=9pt, line cap=round,color=cyan]
  \tikzstyle{xhabD}=[-,opacity=0.125, line width=9pt, line cap=round,color=blue]
  \tikzstyle{xhabE}=[-,opacity=0.125, line width=9pt, line cap=round,color=yellow]
}
\newcommand{\mytitle}{Placing Green Bridges Optimally,
with Habitats Inducing Cycles}
\title{\Large\bf \mytitle}
\author{Maike Herkenrath}
\author{Till Fluschnik\footnote{Supported by DFG, project MATE (NI/369-17).}}
\author{Francesco Grothe\footnote{Supported by DFG, project MATE (NI/369-17).}}
\author{Leon Kellerhals}
\affil{\small
  Technische Universit\"at Berlin, Faculty~IV, Institute of Software Engineering and Theoretical Computer Science, Algorithmics and Computational Complexity.\protect\\
  \texttt{\{herkenrath,\,f.grothe\}@campus.tu-berlin.de,\,\{till.fluschnik,\,leon.kellerhals\}@tu-berlin.de}
}
\date{}
\begin{document}

\maketitle

\begin{abstract}
Choosing the placement of wildlife crossings (i.e., green bridges)
to reconnect animal species' fragmented habitats is among the 17 goals towards sustainable development by the UN.
We consider the following established model:
Given a graph whose vertices represent the fragmented habitat areas and whose weighted edges represent possible green bridge locations,
as well as the habitable vertex set for each species, find the cheapest set of edges such that each species' habitat is connected.
We study this problem from a theoretical (algorithms and complexity) and an experimental perspective,
while focusing on the case where habitats induce cycles.
We prove that the NP-hardness persists in this case even if the graph structure is restricted. %
If the habitats additionally induce faces in plane graphs however, the problem becomes efficiently solvable. %
In our empirical evaluation we compare this algorithm as well as ILP formulations for more general variants and an approximation algorithm with another.
Our evaluation underlines that each specialization is beneficial in terms of running time, whereas the approximation provides highly competitive solutions in practice.
\end{abstract}

\section{Introduction}
\label{sec:intro}

Habitat fragmentation due to human-made infrastructures 
like roads or train tracks,
leading to wildlife-vehicle collisions,
a severe threat not only to animals,
up to impacting biodiversity~\cite{bennett2017effects,SawayaKC14},
but also to humans.
Installing wildlife crossings like bridges, tunnels~\cite{WoltzGD08}, ropes~\cite{goldingay2017can},
et cetera 
(we refer to those as \emph{green bridges} from here on)
in combination with road fencing (so as to ensure that the green bridges are being used)
allows a cost-efficient~\cite{HuijserDCAM09}
reduction of wildlife-vehicle collisions by up to 85\%~\cite{HuijserMHKCSA08}.
In this paper,
we study the problem of finding the right positions for green bridges that keeps the building cost at a minimum and ensures that every habitat is fully interconnected.
We focus on those cases in which the structure of the habitats is very simple
and study the problem from both a theoretical (algorithmics and computational complexity) as well as an experimental perspective.

We follow a model recently introduced by~\citet{FK21}.
Herein,
the modeled graph can be understood as path-based graph~\cite{UrbanMTS09,galpern2011patch}:
a vertex corresponds to a part fragmented by human-made infrastructures subsuming habitat patches
of diverse animal habitats,
and any two vertices are connected by an edge if the corresponding patches can be connected by a green bridge.
The edges are equipped with the costs of building the respective green bridge (possibly including fencing) in the respective area.
The goal is to construct green bridges in a minimum-cost way such that
in the graph spanned by the green bridges,
the patches of each habitat form a connected component.
Formally, we are concerned with the following.

\decprob{\RwgbpTsc[1]{} (\RwgbpAcr[1]{})}{rwgbp}
{An undirected graph~$G=(V,E)$ with edge costs $c\colon E\to \N$,
a set~$\calH=\{H_1,\dots,H_r\}$ of habitats with $H_i\subseteq V$ and~$|H_i|\geq 2$ for all~$i\in\set{r}$, 
an integer~$k\in\Nzero$.}
{Is there a subset~$F\subseteq E$ with~$c(F)\ceq \sum_{e\in F}c(e)\leq k$ such that
for every~$i\in\set{r}$
it holds that $H_i\subseteq V(G[F])$ and~$G[F][H_i]$ is a connected graph?
}

\noindent
In accordance with~\citet{FK21},
we denote by
\RgbpTsc[1]{} (\RgbpAcr[1]{})
the unit-cost version of~\RwgbpAcr[1]{}.

\paragraph{Our contributions.}
Our study focuses on habitats which induce small cycles.
This is well motivated from practical as well as theoretical standpoints.
Small habitats, in terms of size and limited structures (as to trees and cycles),
appear more often for small mammals, 
amphibians,
and reptiles,
among which several species are at critical state~\cite{HammerM08,hennings2010wildlife}.
From a theoretical point of view,
since the problem is already \NP-hard in quite restricted setups~\cite{FK21},
it is canonical to study special cases such as restrictions on habitat and graph structure.
As the problem is polynomial-time solvable if each habitat induces a tree (\cref{obs:habtree}),
studying habitats inducing cycles are an obvious next step.

\begin{table}[t]
 \centering
 \caption{Our \NP-hardness (refer to~\cref{thm:lbs}) and upper bound results regarding our habitat families. 
 $\calC=\bigcup_{\ell\in\N_{\geq 3}} \{C_\ell\}$ denotes the class of all cycles~$C_\ell$ of length~$\ell$.
 ``P'' is short for ``polynomial-time solvable'',
 $\Delta$ for~$\Delta(G)$.
 $^*$\,($\ell\neq 5$)
 $^\dagger$\,(if every edge is in at most two habitats~\tref{thm:habindface})
 $^\ddagger$\,(if~$\Delta\leq 2$~\tref{obs:degtwo})
 $^\S$\,(if~$\Delta\leq 3$ \tref{cor:habindface})}
 \begin{tabular}{@{}p{2.7cm}p{4.1cm}l@{}}\toprule
 Habitat family & \NP-hard, even if & Upper \\\midrule\midrule
  $\{P_2,C_3\}$ & $G$ is a clique & \classP$^{\dagger\lor\ddagger}$  \\
  $\{C_3\}$ & \emph{(no further restrictions)} & \classP{}$^{\dagger\lor \S}$  \\\midrule
  $\{P_2,C_\ell\}$, ${\ell\in\N_{\geq 4}}$ & $\Delta\geq 4$ or if~$G$ is planar$^*$ & \classP$^{\dagger\lor\ddagger}$ \\ %
  $\{C_\ell\}$, ${\ell\in\N_{\geq 4}}$ & $\Delta\geq 10$ or if~$G$ is planar$^*$  & \classP$^{\dagger\lor\ddagger}$ \\\midrule
  $\{P_2\}\cup\calC$ & $\Delta\geq 3$ and~$G$ is planar & \classP$^{\dagger\lor\ddagger}$  \\
  $\calC$ & $\Delta\geq 9$ and $G$ is planar  & \classP$^{\dagger\lor\ddagger}$ \\ 
  \bottomrule
 \end{tabular}
 \label{tab:results}
\end{table}
Our theoretical results are summarized in~\cref{tab:results}.
We prove that \RgbpAcr{} remains \NP-hard even if each habitat induces a cycle,
even of fixed length at least three.
Most of our \NP-hardness results hold even on restricted input graphs,
i.e.,
planar graphs of small maximum degree.
On the positive side, we prove that for cycle-inducing habitats
we can reduce the problem to maximum-weight matching in an (auxiliary) multi-hypergraph.
From this we derive a polynomial-time special case:
If every edge is shared by at most two habitats,
we can reduce the problem to maximum-weight matching,
a well-known polynomial-time solvable problem.

We perform an experimental evaluation of several algorithms,
including the two mentioned above,
the approximation algorithm given by~\citet{FK21},
as well as an ILP formulation for the case of general
(i.e., not necessarily inducing cycles) 
habitat structures.
Our evaluation shows that each more specialized algorithm 
for the cycle-inducing habitats
perform much better than the next more general one.
Moreover,
we show that the approximation algorithm is fast
with very small loss in solution quality.
Finally,
we underline the connection between solution quality and running time on the one side,
as well as the way of how habitats intersect on the other side.

\paragraph{Further related work.}
\citet{Ament15} gives an informative overview on the topic.
The placement of wildlife crossings
is also studied
with different approaches~\cite{clevenger2002gis,DownsHLAKO14,LoraammD16,BastilleWDW18}.
A related problem is \prob{Steiner Forest} 
(where we only need to connect habitats, 
without requesting connected induced graphs),
which (and an extension of it) 
is studied from an algorithmic perspective~\cite{LaiGSMCM11,JordanS15}.

\section{Preliminaries}
\label{sec:prelims}
\ifshort{}\appendixsection{sec:prelims}\fi{}

Let~$\N$ ($\Nzero$) be the natural numbers without (with) zero.
For a set~$X\subseteq \Nzero$ and~$y\in\Nzero$,
let~$X_{\geq y}\ceq \{x\in X\mid x\geq y\}$.

\paragraph*{Graph Theory.}
For a graph~$G=(V,E)$,
we also denote by~$V(G)$ and~$E(G)$ the vertex set~$V$ and edge set~$E$ of~$G$,
respectively.
For an edge set~$E'\subseteq E$ and vertex set~$V'\subseteq V$,
we denote by~$G[E']=(\bigcup_{e\in E'}e,E')$
and by~$G[V']\ceq (V',\{e\in E\mid e\subseteq V'\})$ the graph induced by~$E'$ and by~$V'$,
respectively.
The graph~$G[E'][V']$ is the graph~$G'[V']$ with~$G'=G[E']$.
By~$\Delta(G)$ ($\delta(G)$) we denote the maximum (minimum) vertex degree of~$G$.
Denote by~$N_G(v)\ceq \{w\in V(G)\mid \{v,w\}\in E(G)\}$ and~$N_G[v]\ceq N_G(v)\cup\{v\}$
the open and closed neighborhood of~$v$ in~$G$.

\paragraph*{Basic Observations.}

We can assume that every vertex in our graph is contained in a habitat.
We say that an edge~$e$ is \emph{shared} by two habitats~$H,H'$ if~$e\subseteq H\cap H'$.
We say that a set~$F' \subseteq E$ \emph{satisfies} a habitat~$H \in \calH$ if $H\subseteq V(G[F'])$ and~$G[F'][H]$ is connected.
We have the following.

\begin{observation}%
 \label{obs:habtree}
 \RwgbpAcr{} where each habitat induces a tree
 is solvable in~$O(|\calH|\cdot |G|)$ time.
\end{observation}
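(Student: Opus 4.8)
The plan is to show that when each habitat induces a tree, the optimal solution is \emph{forced}, so the decision problem collapses to computing a single edge set and comparing its cost with~$k$.

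First I would observe that, for any $F\subseteq E$ and any habitat~$H_i$, the graph~$G[F][H_i]$ sees only those edges of~$F$ lying entirely inside~$H_i$, i.e.\ the edges of $F\cap E(G[H_i])$. Since $G[H_i]$ is a tree on~$|H_i|$ vertices it is minimally connected: it has exactly $|H_i|-1$ edges, each of which is a bridge. Hence the only subgraph of~$G[H_i]$ whose edge set connects all of~$H_i$ is~$G[H_i]$ itself, which forces $E(G[H_i])\subseteq F$ whenever~$F$ satisfies~$H_i$. Taking the union over all habitats gives the necessary condition $F\supseteq F^\star$, where $F^\star\ceq\bigcup_{i\in\set{r}}E(G[H_i])$.

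Next I would check sufficiency: putting $F\ceq F^\star$ satisfies every habitat, because $G[F^\star][H_i]$ contains, as a subgraph on vertex set~$H_i$, the connected tree~$G[H_i]$, and $|H_i|\ge 2$ then also yields $H_i\subseteq V(G[F^\star])$. As the costs are nonnegative, $F^\star$ is the unique minimum-cost solution, so the instance is a \yes-instance if and only if $c(F^\star)\le k$.

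It remains to bound the running time. For each habitat~$H_i$ I would mark its vertices in $O(|H_i|)$ time, scan~$E$ once to collect every edge with both endpoints marked into~$F^\star$ (using a per-edge flag so an edge shared by several habitats is added only once), and then unmark. This costs $O(|H_i|+|E|)$ per habitat, hence $O(\sum_i|H_i|+|\calH|\cdot|E|)=O(|\calH|\cdot|G|)$ overall; summing~$c(e)$ over~$F^\star$ and testing against~$k$ is dominated by this. I do not expect a genuine obstacle here: the single conceptual step is the forcing argument (a tree is minimally connected), and once it is in place the algorithm and its analysis are routine bookkeeping.
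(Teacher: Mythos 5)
Your proposal is correct and follows essentially the same route as the paper's proof: since each $G[H_i]$ is a tree and hence minimally connected, every edge of $G[H_i]$ is forced into any feasible solution, so $F^\star=\bigcup_{H\in\calH}E(G[H])$ is the unique minimum-cost solution and can be computed and compared with~$k$ in $O(|\calH|\cdot|G|)$ time. You merely spell out the forcing argument and the bookkeeping in more detail than the paper does.
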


\appendixproof{obs:habtree}
{
\begin{proof}
 Since each habitat~$H\in\calH$ induces a tree,
 we need to take all edges of~$G[H]$ into the solution.
 Hence~$F=\bigcup_{H\in \calH} E(G[H])$ is a minimum-cost solution computable in~$O(|\calH|\cdot |G|)$ time.
\end{proof}
}

\begin{observation}%
 \label{obs:degtwo}
 \RwgbpAcr{} on graphs of maximum degree two is solvable in~$O(|\calH|\cdot |G|)$ time.
\end{observation}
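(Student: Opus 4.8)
The plan is to exploit that a graph with $\Delta(G)\leq 2$ is a vertex-disjoint union of paths and cycles, and to argue that the problem decomposes over the connected components of~$G$. First I would observe that every habitat must lie inside a single connected component: if $G[F][H_i]$ is connected, then any two vertices of~$H_i$ are joined by a path in~$G$ and hence share a component, so a habitat spanning two or more components immediately certifies a no-instance. Since edges in different components never interact and the cost~$c$ is additive, a minimum-cost solution is the disjoint union of minimum-cost solutions computed independently per component, and it suffices to solve each component optimally.

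Next I would handle the two component types. For a \emph{path} component, the induced subgraph~$G[H_i]$ of a habitat is connected only if $H_i$ is a set of consecutive path vertices (a contiguous subpath); otherwise two selected but nonadjacent vertices cannot be joined within~$H_i$, and the instance is infeasible. A contiguous subpath is a tree, so by \cref{obs:habtree} all of its edges are forced, and the optimal restriction to the path is simply $\bigcup_i E(G[H_i])$. For a \emph{cycle} component~$C$ the same contiguity reasoning classifies each habitat as (i)~a proper contiguous arc, whose $|H_i|-1$ edges are forced since the arc is again a tree; (ii)~the whole vertex set~$V(C)$, whose induced subgraph is~$C$ itself; or (iii)~neither, which makes the instance infeasible.

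Let $F_0$ be the union of the forced arc edges. If no habitat of type~(ii) is present, then $F_0$ is feasible and, as all edge costs are positive, optimal on~$C$ (including any additional edge strictly increases the cost). If some habitat equals~$V(C)$, then the solution restricted to~$C$ must additionally be a connected spanning subgraph of the cycle, i.e.\ it may omit at most one cycle edge; since an omitted edge must avoid the forced set~$F_0$, the optimum keeps all edges except the most expensive edge of $E(C)\setminus F_0$ (keeping all edges if $F_0=E(C)$). Deleting this single edge preserves every arc habitat, whose edges lie in~$F_0$, and keeps~$C$ connected, hence it also satisfies the type-(ii) habitat.

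Finally I would assemble the per-component solutions, compare the total cost against~$k$, and bound the running time: decomposing~$G$, assigning each habitat to its component, testing contiguity, and marking forced edges each cost $O(|G|)$ per habitat, giving $O(|\calH|\cdot|G|)$ overall. The only genuinely non-routine step is the cycle analysis: the point is that the induced-subgraph connectivity requirement (rather than mere reachability in~$G[F]$) forces contiguity, so the sole source of optimization freedom is a whole-cycle habitat, which permits deleting exactly one cycle edge, namely the most expensive unforced one.
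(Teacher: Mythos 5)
Your proposal is correct and follows essentially the same route as the paper's (much terser) proof: decompose into path and cycle components, observe that every habitat must induce a contiguous arc (forcing all its edges) or an entire cycle, and on a cycle component omit at most one edge, namely the most expensive unforced one. Your write-up just makes explicit the case distinctions (infeasibility, no whole-cycle habitat, $F_0=E(C)$) that the paper leaves implicit.
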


\appendixproof{obs:degtwo}
{
\begin{proof}
 Every connected component is a cycle or a path.
 Hence,
 every habitat is either a cycle or a path.
 In a connected component which is a cycle,
 all edges induced by habitats inducing paths must be taken.
 If not all edges are taken,
 then we can leave out exactly one remaining edge of largest cost from the solution.
\end{proof}
}

\section{Lower Bounds}
\label{sec:lbs}
\ifshort{}\appendixsection{sec:lbs}\fi{}

In this section we show that the \NP-hardness of \RgbpAcr{} persists even if the habitats induce simple structures and the graphs are restricted.
We prove the following.
\begin{theorem}
 \label{thm:lbs}
 \RgbpAcr{} is \NP-hard even if:
 \begin{compactenum}[(i)]
 \item each habitat induces a $P_2$ or a~$C_3$ and~$G$ is a clique, 
  or each habitat induces a~$C_\ell$ for any fixed~$\ell\geq 3$.
  \item each habitat induces (a $P_2$ or) a~$C_\ell$ for any fixed $\ell\in\N_{\geq4}\setminus\{5\}$ and~$G$ is planar.
  \item each habitat induces a~$P_2$ or a~$C_\ell$ for any fixed $\ell\geq 4$ and $\Delta(G)\geq 4$,
    or each habitat induces a~$C_\ell$ for any fixed $\ell\geq 4$ and $\Delta(G)\geq 10$.
  \item each habitat induces a~$P_2$ or a cycle, $G$ is planar, and~$\Delta(G)\geq 3$, 
    or each habitat induces a cycle, $G$ is planar, and~$\Delta(G)\geq 9$.
 \end{compactenum}
\end{theorem}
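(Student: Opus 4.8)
My plan rests on a structural observation that also underlies the matching-based upper bounds. If a habitat~$H$ induces a cycle~$C_\ell$, then the only edges available inside $G[F][H]$ are the $\ell$ cycle edges, and such a subgraph is connected iff at most one of these edges is missing from~$F$. Hence for cycle-inducing habitats a solution is determined by the set $D \ceq E^* \setminus F$ of \emph{dropped} edges, where $E^* \ceq \bigcup_i E(G[H_i])$, subject to $|D \cap E(G[H_i])| \le 1$ for every~$i$; and since all costs are unit, minimizing $|F|$ is the same as maximizing~$|D|$. As a dropped edge shared by several habitats consumes the single ``drop budget'' of each of them, this is exactly a maximum matching in the hypergraph whose vertices are the habitats and whose hyperedges are the edges of~$E^*$ (each labelled by the habitats containing it). The \NP-hardness will come from realizing hard packing/independence instances inside this hypergraph, and the crucial ingredient is an edge lying in three or more habitats — precisely the regime the polynomial-time results exclude.

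The core reduction is from \vcTsc{} (equivalently \textsc{Independent Set}). Given $G'=(V',E')$ I would introduce, for each $v\in V'$, one \emph{selector} edge~$d_v$, and for each $\{u,v\}\in E'$ one cycle habitat threaded through both $d_u$ and~$d_v$; the remaining \emph{padding} edges of that cycle I would pin down so they can never be dropped, using $P_2$~habitats on their endpoints in the variants that allow~$P_2$. Then the only droppable edges of the cycle for $\{u,v\}$ are $d_u$ and~$d_v$, so ``at most one drop per cycle'' forbids dropping both simultaneously. Dropping~$d_v$ is a single choice shared by all cycles of edges incident to~$v$, so a maximum family of simultaneously droppable selectors is exactly a maximum independent set of~$G'$, and choosing the budget $k = |E^*| - (|V'| - k')$ makes the instance a \yes{}-instance iff $G'$ has a vertex cover of size at most~$k'$. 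Completing~$G$ to a clique adds no intra-habitat edge and hence changes nothing, yielding the clique statement of~(i); padding the cycles to a common length~$\ell$ yields the fixed-$C_\ell$ statements.

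To force a prescribed uniform length~$\ell$ I would lengthen each cycle with pinned padding, and I expect $\ell=5$ to be excluded because the selector/padding layout needs a specific amount of room (or parity) that this one length narrowly fails to provide. For the planar statements of~(ii) and~(iv) I would start from planar \vcTsc{} and replace the single high-multiplicity selector edge~$d_v$ — which forces high degree and destroys planarity — by a planar \emph{consistency wire}: a chain of small cycle gadgets around~$v$ that propagates one binary choice to all incident edge-cycles while keeping every vertex of bounded degree. The degree thresholds then fall out of how many cycles meet at a gadget vertex, giving $\Delta\ge 4$ (with $P_2$) in~(iii) and $\Delta\ge 3$ (with $P_2$, planar) in~(iv). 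In the variants without~$P_2$, pinning must be simulated by additional auxiliary cycles through each padding edge, which raises the number of cycles meeting a vertex and hence the thresholds to $\Delta\ge 10$ and $\Delta\ge 9$; the bounded-degree (non-planar) statements would instead be obtained from \cvcTsc{}.

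The genuinely delicate part is not the counting argument but the simultaneous control of the three parameters — uniform cycle length, planarity, and bounded maximum degree — together with the $P_2$-free pinning. Routing the consistency wires planarly while keeping degrees at the stated small constants, simulating forced edges using only cycles, and ensuring that each habitat's vertex set induces \emph{exactly} a cycle (no accidental chords created by shared selectors) is where the construction has to be done with care. I expect this to account for the case distinctions, for the comparatively large degree constants $10$ and~$9$, and for the excluded length~$\ell=5$.
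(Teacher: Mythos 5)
Your overall strategy coincides with the paper's: every case is obtained by a reduction from (planar/cubic) \vcTsc{} in which each vertex $v$ receives one \emph{selector} edge shared by one cycle habitat per edge incident to $v$, the remaining cycle edges are pinned, and the budget forces the set of kept selectors to be a vertex cover. Your ``count dropped edges, at most one drop per cycle habitat'' reformulation is the exact complement of this and is sound (it is the content of the paper's \cref{lem:matchsol}). For case (i) your sketch is, up to this change of viewpoint, the paper's \cref{constr:c3}: an apex vertex $x$ adjacent to all of $V$, a $P_2$ habitat on every original edge, and a triangle habitat $e\cup\{x\}$ per edge; completing to a clique is harmless because triangles have no chords.

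The gap is that beyond case (i) you only announce gadgets rather than construct and verify them, and the announced parts are precisely where the theorem's stated constants come from. First, you need a device that pins an edge using \emph{only} cycle habitats while controlling both the resulting habitat lengths and the degree increase. The paper's tool is the $(p,q)$-crowning (\cref{def:crown}): replace $\{a,b\}$ by a base path of length $p+1$ and two crown paths of length $q+1$, each crown habitat inducing a $C_{p+q+2}$; one must check that every optimal solution keeps the entire base path (cost $p+2q+1$ versus $p+2q+2$ otherwise) and that each application raises $\deg(a)$ and $\deg(b)$ by exactly $2$. The thresholds $\Delta\geq 10$ and $\Delta\geq 9$ are then literal computations ($4+3\cdot 2$ on the doubled-graph construction for (iii), $3+3\cdot 2$ on the tree construction for (iv)), and the exclusion of $\ell=5$ in (ii) is a concrete arithmetic constraint on the crowning parameters applied to the $4$-cycles of the planar base construction ($K_{2,n}$ with apexes $x,y$) — none of which can be read off from ``I expect the room or parity narrowly fails.'' Second, for case (iv) your ``consistency wire of small cycle gadgets around $v$'' is not worked out, and it is not clear it can be realized so that every habitat induces an honest chordless cycle at $\Delta=3$; the paper instead routes each edge habitat as a cycle through two mirrored full binary trees whose corresponding leaves are joined by the selector edges, which yields planarity and maximum degree $3$ immediately. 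Since the theorem's content \emph{is} these precise restrictions and constants, deferring exactly those constructions leaves the proof incomplete.
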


\noindent
For each case we provide a polynomial-time reduction from the following \NP-hard \cite{GareyJ79} problem.

\decprob{\cvcTsc{} (\cvcAcr{})}{cvc}
{An undirected, cubic graph~$G=(V,E)$ and an integer~$p\in\Nzero$.}
{Is there a set~$V'\subseteq V$ with~$|V'|\leq p$ such that~$G[V\setminus V']$ contains no edge?}

\noindent
We first provide constructions for the four cases
before presenting 
the correctness proofs.

\paragraph{Constructions.}
We next provide the constructions for the base cases (i.e., small cycle lengths and habitats inducing $P_2$s) of \cref{thm:lbs}(i)--(iv).
The results can be extended by employing a central gadget which we call \emph{crown}.
The crown allows us to exclude $P_2$s from the habitat family and to extend cycle lenghts all while preserving the planarity of the reductions.
See \cref{fig:reductions}(a) for a crown.

\begin{figure*}[t]
 \centering
 \begin{tikzpicture}
  \def\xr{0.8}
  \def\yr{1}
  \tikzpreamble{}
  
  \begin{scope}[shift={(-4*\xr,0*\yr)}]
    \node at (-1.5*\xr,1.25*\yr)[]{(b)};
    \node (t) at (0*\xr,-1*\yr)[xnode]{};
    \foreach \l/\x in {1/-1.5,i/-0.5,j/0.5,n/1.5}{\node (v\l) at (\x*\xr,0.5*\yr)[xnode]{};
    \draw[xedge] (v\l) to (t);}
    \foreach \x in {-1,0,1}{\node at (\x*\xr,0.5*\yr)[]{$\cdots$};}
    \foreach \a/\b/\h in {1/i/A,i/n/B,j/n/C}{
      \draw[xxedge,color=red] (v\a) to [out=45,in=135](v\b);
      \draw[xhab\h] (v\a) to [out=45,in=135](v\b);
      \draw[xhab\h] (v\a) to (t);\draw[xhab\h] (v\b) to (t);
    }
    \foreach \a/\b in {1/j,1/n,i/j}{\draw[-,thick,dashed] (v\a) to [out=45,in=135](v\b);}
    \end{scope}
  
  \begin{scope}
    \node at (-1.5*\xr,1.25*\yr)[]{(c)};
    \node (s) at (0*\xr,1*\yr)[xnode]{};
    \node (t) at (0*\xr,-1*\yr)[xnode]{};
    \foreach \l/\x in {1/-1.5,i/-0.5,j/0.5,n/1.5}{\node (v\l) at (\x*\xr,0*\yr)[xnode]{};\draw[xxedge,color=red] (s) to (v\l);\draw[xedge] (v\l) to (t);}
    \foreach \x in {-1,0,1}{\node at (\x*\xr,0*\yr)[]{$\cdots$};}
    \foreach \a/\b/\h in {1/i/A,i/n/B,j/n/C}{\draw[xhab\h] (s) to (v\a) to (t) to (v\b) to (s);}
  \end{scope}
  
  \begin{scope}[shift={(4*\xr,0*\yr)}]
    \node at (-1.5*\xr,1.25*\yr)[]{(d)};
    \foreach \l/\x in {1/-1.5,i/-0.5,j/0.5,n/1.5}{\node (v\l) at (\x*\xr,0.5*\yr)[xnode]{};
    \node (w\l) at (\x*\xr,-0.5*\yr)[xnode]{};
    \draw[xedge] (v\l) to (w\l);}
    \foreach \x in {-1,0,1}{\node at (\x*\xr,0*\yr)[]{$\cdots$};}
    \foreach \a/\b/\h in {1/i/A,i/n/B,j/n/C}{
      \draw[xxedge,color=red] (v\a) to [out=45,in=135](v\b);
      \draw[xhab\h] (v\a) to [out=45,in=135](v\b);
      \draw[xedge] (w\a) to [out=-45,in=-135](w\b);\draw[xhab\h] (w\a) to [out=-45,in=-135](w\b);
      \draw[xhab\h] (v\a) to (w\a);\draw[xhab\h] (v\b) to (w\b);
    }
    \end{scope}
    
    \begin{scope}[shift={(-7.5*\xr,-1*\yr)}]
    \node at (-1*\xr,1.25*\yr+1*\yr)[]{(a)};
    
    \node (a) at (-1*\xr,0*\yr)[xnode,label=-90:{$a$}]{};
    \node (b) at (1*\xr,0*\yr)[xnode,label=-90:{$b$}]{};
    \node (c1) at (0*\xr,1*\yr*\yr)[xnode]{};
    \node (c2) at (0*\xr,2*\yr*\yr)[xnode]{};
    
    \foreach\x/\y in {a/b,a/c1,c1/b,a/c2,c2/b}{\draw[xedge] (\x) to node[midway,xnode](D\x\y){}(\y);}
    \foreach\x/\y/\z/\h in {a/c1/b/D,a/c2/b/E}{\draw[xhab\h] (\x) to (\y) to (\z) to (\x);}
    \foreach\x/\y in {a/Dab,Dab/b,a/Dac1,Dac1/c1,c1/Dc1b,a/Dac2,Dac2/c2,c2/Dc2b}{\draw[xxedge,color=green!66!black] (\x) to (\y);}
    
  \end{scope}
  
  \begin{scope}[shift={(7.5*\xr,-1*\yr)}]
    \node at (-1*\xr,1.25*\yr+1*\yr)[]{(e)};
  \end{scope}
 \end{tikzpicture}
 \begin{tikzpicture}[every node/.style = {circle,fill,scale=0.33,draw,color=\thecolor},level distance=20pt,level/.style = {sibling distance = 12mm/#1},edge from parent/.style={draw,ultra thick}]
  \def\xr{0.8}
  \def\yr{1}
  \tikzpreamble{}

  \newcommand{\bintree}[1]{
    \node (#1x1){}
    child {node (#1x2){} 
          child {node (#1x4){}
            child {node (#1x8){}}
            child {node (#1x9){}}
                }
        child {node (#1x5){}
               child {node (#1x10){}}
               child {node (#1x11){}}
              }
        }
  child {node (#1x3){} 
        child {node (#1x6){}
          child {node (#1x12){}}
          child {node (#1x13){}}
              }
        child {node (#1x7){}
               child {node (#1x14){}}
               child {node (#1x15){}}
              }
        };
  }
  
  \begin{scope}[rotate=90]
    \bintree{A}
  \end{scope}
  
  \begin{scope}[shift={(6*\xr,0)},xscale=-1,rotate=90]
    \bintree{B}
  \end{scope}
  
  \foreach \x in {8,...,15}{\pgfmathsetmacro\y{int(\x-7)};\draw[xedge] (Ax\x) to node[midway,below,draw=none,fill=none,scale=1.75,yshift=4pt]{$v_{\y}$}(Bx\x);}
  \foreach \a in {8,10}{
      \draw[xhabA] (Ax\a) to (Bx\a);
  }
  \foreach \a/\b in {8/4,4/2,2/5,5/10}{
      \draw[xhabA] (Ax\a) to (Ax\b);
      \draw[xhabA] (Bx\a) to (Bx\b);
  }
  
  \foreach \a in {11,15}{
      \draw[xhabB] (Ax\a) to (Bx\a);
  }
  \foreach \a/\b in {11/5,5/2,2/1,1/3,3/7,7/15}{
      \draw[xhabB] (Ax\a) to (Ax\b);
      \draw[xhabB] (Bx\a) to (Bx\b);
  }
 \end{tikzpicture}
 \caption{(a) A crowning and (b)--(e) four construction types. 
 (a) The graph introduced by a \pqcrown{1}{3}ing with its two \crown{}-habitats (blue and yellow).
 Thick (green) edges form a minimum cardinality solution for the \crown{}-habitats with 8 edges.
 (b)--(d) The construction are exemplified for a graph containing the edges~$\{v_1,v_i\}$, $\{v_i,v_n\}$, and $\{v_j,v_n\}$,
 but not the edges $\{v_1,v_j\}$, $\{v_1,v_n\}$, and $\{v_i,v_j\}$,
 (b)+(c)
 Three habitats corresponding to the edges~$\{v_1,v_i\}$ (magenta),
 $\{v_i,v_n\}$ (green), 
 and~$\{v_j,v_n\}$ (blue) are depicted.
 Endpoints of thick (red) edges are candidates for \pqcrown{p}{q}ing.
 In (b),
 dashed edge can be added without changing the correctness.
 (d) Thick (red) edges are candidates for subdivisions.
 (e)
 Habitats for the edges~$\{v_1,v_3\}$ and~$\{v_4,v_8\}$ are depicted. 
 Each of the thick edges forms a habitat.
 }
 \label{fig:reductions}
\end{figure*}

\begin{definition}
 \label{def:crown}
 Let~$G$ be a graph with two distinct vertices~$a,b\in V(G)$ and habitats~$\calH$.
 When we say we \pqcrown{p}{q} $a$ and~$b$,
 then 
 we connect~$a$ and~$b$ 
 by a so-called base path~$P^0$ of length~$p+1$ and 
 two \crown-paths~$P^1,P^2$, each of length~$q+1$,
 and add two \crown-habitats $H_i=V(P^0)\cup V(P^i)$ for~$i\in\{1,2\}$.
\end{definition}

\begin{observation}
 The minimum number of edges to satisfy both crown habitats of a $(p,q)$-crowning
 is~$p+2q+1$ (every edge from the base path and in each \crown-path, all but one edge).
\end{observation}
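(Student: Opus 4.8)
The plan is to establish the claimed value by matching upper and lower bounds on $|F|$ over all edge sets $F$ satisfying both crown-habitats. First I would record the relevant counts: the base path $P^0$ has $p+1$ edges, each crown-path $P^i$ has $q+1$ edges, and since $P^0$ and $P^i$ share only the endpoints $a$ and $b$, the crown-habitat $H_i=V(P^0)\cup V(P^i)$ induces a cycle on exactly $p+q+2$ vertices. Because an edge outside the crown can never help connect a crown-habitat, I may assume without loss of generality that a minimum solution uses only the $p+2q+3$ crown edges.

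For the upper bound I would simply exhibit the set named in the statement: all $p+1$ base-path edges, plus all but one edge of $P^1$ and all but one edge of $P^2$, for a total of $(p+1)+q+q=p+2q+1$ edges. Feasibility is immediate: restricted to $H_i$, this set is the inducing cycle $P^0\cup P^i$ with a single $P^i$-edge deleted, i.e.\ a Hamiltonian path, which is connected and spans $H_i$; hence both crown-habitats are satisfied.

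For the lower bound, let $a_0$, $a_1$, $a_2$ be the numbers of edges of $F$ on $P^0$, $P^1$, $P^2$, so that $|F|=a_0+a_1+a_2$. Satisfying $H_i$ requires a connected spanning subgraph of a cycle on $p+q+2$ vertices, which needs at least $p+q+1$ edges; as the only edges inside $H_i$ lie on $P^0$ or $P^i$, this yields $a_0+a_1\ge p+q+1$ and $a_0+a_2\ge p+q+1$. Summing the two inequalities and then subtracting $a_0$ gives $|F|\ge 2(p+q+1)-a_0$, and since the base path supplies only $p+1$ edges we have $a_0\le p+1$, whence $|F|\ge 2(p+q+1)-(p+1)=p+2q+1$, matching the construction.

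The main thing to get right --- though it is minor --- will be the claim that $H_i$ induces precisely the cycle $P^0\cup P^i$ with no extra chords, so that the elementary bound ``a connected graph on $N$ vertices has at least $N-1$ edges'' applies directly; this holds because crowning introduces no edges among the new vertices other than the path edges. The rest is routine edge-counting.
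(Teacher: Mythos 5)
Your proof is correct and follows the same idea the paper gives: the paper only states the achieving set (all base-path edges plus all but one edge of each crown-path) in a parenthetical and omits any optimality argument, while your matching lower bound via $a_0+a_i\ge p+q+1$ and $a_0\le p+1$ is a routine and valid completion of exactly that approach.
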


\begin{construction}
 \label{constr:c3}
 Let~$I=(G,p)$ be an instance of \cvcAcr{} with~$G=(V,E)$, $V=\{v_1,\dots,v_n\}$, and~$E=\{e_1,\dots,e_m\}$.
 Construct an instance~$I'=(G',\calH,k)$ as follows
 (see~\cref{fig:reductions}(b)).
 Let~$G'=(V',E')$ with $V'\ceq V\cup\{x\}$ and $E'\ceq E\cup\bigcup_{i=1}^n \{\{x,v_i\}\}$.
 Let~$\calH=\{H_1,\dots,H_m\}\cup \{H_1',\dots,H_m'\}$,
 where~$H_i\ceq e_i$ and~$H_i'\ceq e_i\cup \{x\}$ for all~$i\in\set{m}$.
 Let~$k\ceq m+p$.
\end{construction}

\begin{construction}
 \label{constr:planar}
 Let~$I=(G,p)$ be an instance of \cvcAcr{} with~$G=(V,E)$, $V=\{v_1,\dots,v_n\}$, and~$E=\{e_1,\dots,e_m\}$.
 Construct an instance~$I'=(G',\calH,k)$ as follows
 (see~\cref{fig:reductions}(c)).
 Let~$G'=(V',E')$ with $V'\ceq V\cup\{x,y\}$ and~$E'\ceq \bigcup_{i=1}^n \{\{x,v_i\},\{y,v_i\}\}$.
 Let~$\calH=\{H_1,\dots,H_n\}\cup \{H_1',\dots,H_m'\}$,
 where~$H_i\ceq \{v_i,x\}$ for all~$i\in\set{n}$ and~$H_i'\ceq e_i\cup \{x,y\}$ for all~$i\in\set{m}$.
 Let~$k\ceq n+p$.
\end{construction}

\begin{construction}
 \label{constr:deg}
 Let~$I=(G,p)$ be an instance of \cvcAcr{} with~$G=(V,E)$, $V=\{v_1,\dots,v_n\}$, and~$E=\{e_1,\dots,e_m\}$.
 Construct an instance~$I'=(G',\calH,k)$ as follows
 (see~\cref{fig:reductions}(d)).
 Let~$G'=(V',E')$ with $V'\ceq V\cup V^*$ where~$V^*=\{v_1^*,\dots,v_n^*\}$ and~$E'\ceq E\cup E^*\cup\bigcup_{i=1}^n \{\{v_i,v_i^*\}\}$,
 where~$E^*=\{e^*_\ell=\{v_i^*,v_j^*\}\mid e_\ell=\{v_i,v_j\}\in E\}$.
 Let~$\calH=\{H_1,\dots,H_m\}\cup \{H_1^*,\dots,H_m^*\}\cup \{H_1',\dots,H_m'\}$,
 where~$H_i\ceq e_i$, 
 $H^*_i\ceq e_i^*$,
 and~$H_i'\ceq e_i\cup e_i^*$ for all~$i\in\set{m}$.
 Let~$k\ceq 2m+p$.
\end{construction}

\begin{construction}
 \label{constr:bintree}
 Let~$I=(G,p)$ an instance of \cvcAcr{} with~$G=(V,E)$, $V=\{v_1,\dots,v_n\}$, and~$E=\{e_1,\dots,e_m\}$.
 W.l.o.g.\ we assume $n$ to be a power of two as we can add isolated vertices.
 Construct an instance~$I'=(G',\calH,k)$ as follows
 (see~\cref{fig:reductions}(e)).
 Let~$T$ be a full binary tree of height~$\tlog[2]{n}$
 with root~$w$.\footnote{A full binary tree~$T$ with root~$w$ of height~$t$
    is a tree with exactly~$2^t$ leaves each of distance exactly~$t$ to~$w$
    and every inner node has exactly two children.}
 Denote by~$w_1,\dots,w_n$ the leaves in the order provided by a depth-first search starting at~$w$.
 Let~$T'$ be a copy of~$T$,
 and denote by~$w_i'$ the copy of leaf~$w_i$.
 Add~$T$ and~$T'$ to~$G'$ and
 for each~$i\in\set{n}$,
 add the edge~$\{w_i,w_i'\}$.
 For the construction of the habitats,
 denote the non-leaf vertices of~$T$ by
 $w_C$ where~$C\subseteq \set{n}$ is the maximal subset of leaf indices in the subtree of~$T$ rooted at~$w_C$ 
 (analogously for~$T'$).
 For each edge~$e\in E(T)\cup E(T')$,
 $\calH$ contains the habitat~$H_e=e$.
 Now, for each edge~$e_\ell=\{v_i,v_j\}\in E$,
 $\calH$ contains the habitat~$H_\ell'\ceq V(T_\ell) \cup V(T_\ell')$,
 where~$T_\ell$ is the subtree of~$T$ rooted at~$w_C$ with~$C$ being the smallest set with~$\{i,j\}\subseteq C$ 
 (analogously for~$T_\ell'$).
 Finally,
 let~$k\ceq |E(T)|+|E(T')|+p$.
\end{construction}
\paragraph{Correctness.}
We next prove \cref{thm:lbs}(i)--(iv), by using \cref{constr:c3,constr:planar,constr:deg,constr:bintree}
and extending it with the crown (see \cref{def:crown}).

\begin{proof}[Proof of \cref{thm:lbs}(i)]
 Let~$I=(G,p)$ be an instance of CVC,
 and let~$I'=(G',\calH,k)$ be an instance of \RgbpAcr{} obtained from~$I$ using~\cref{constr:c3}.
 We claim that~$I$ is a \yes-instance 
 if and only if
 $I'$ is a \yes-instance.
 
 \RD{}
 Let~$V'\subseteq V$ be a vertex cover of size at most~$p$.
 We claim that~$F=E\cup\bigcup_{v_i\in V'} \{\{x,v_i\}\}$ is a solution to~$I'$.
 Note that~$|F|\leq m+p$ and~$H_i\in F$ for all~$i\in\set{m}$.
 Suppose the claim is false,
 that is,
 there is a habitat~$H_\ell'\in\calH$ that is not connected.
 Since~$e_\ell=\{v_i,v_j\}\subseteq F$,
 neither~$\{x,v_i\}$ nor~$\{x,v_j\}$ is in~$F$.
 Hence,
 $V'\cap e_\ell=\emptyset$,
 a contradiction.
 
 \LD{}
 Let~$F$ be a solution to~$I$.
 We know that~$H_i\in F$ for all~$i\in\set{m}$.
 We claim that~$V'=\{v_i\mid \{x,v_i\}\in F\}$ is a vertex cover of~$G$.
 Note that~$|V'|\leq p$ since~$|F\setminus E|\leq p$.
 Suppose the claim is false,
 that is,
 there is an edge~$e_\ell=\{v_i,v_j\}\in E$ with~$e_\ell\cap V'=\emptyset$.
 By construction of~$V'$,
 we have that each of~$\{x,v_i\}$ and~$\{x,v_j\}$ are not in~$F$.
 Hence the habitat~$H_\ell'$ is not connected,
 a contradiction.
 
 Note that adding any edge to~$G'$ does not change the correctness.
 For the \NP-hardness for only~$C_\ell$-habitats, 
 $\ell\geq 3$,
 replace every edge~$e=\{v,w\}$ in~$E$ by a~$(\ell-3,1)$-crowning
 and adjust~$\calH$ and~$k$ accordingly.
\end{proof}

\begin{proof}[Proof of \cref{thm:lbs}(ii)]
 Let~$I=(G,p)$ be an instance of CVC,
 and let~$I'=(G',\calH,k)$ be an instance of \RgbpAcr{} obtained from~$I$ using~\cref{constr:planar}.
 We claim that~$I$ is a \yes-instance 
 if and only if
 $I'$ is a \yes-instance.
 
 \RD{}
 Let~$V'\subseteq V$ be a vertex cover of size at most~$p$.
 We claim that~$F=\bigcup_{v_i\in V} \{\{x,v_i\}\} \cup \bigcup_{v_i\in V'} \{\{y,v_i\}\}$ is a solution to~$I'$.
 Note that~$|F|\leq n+p$ and~$H_i\in F$ for all~$i\in\set{n}$.
 Suppose the claim is false,
 that is,
 there is a habitat~$H_\ell'\in\calH$ that is not connected.
 Since~$\{\{v_i,x\},\{x,v_j\}\}\subseteq F$,
 neither~$\{y,v_i\}$ nor~$\{y,v_j\}$ is in~$F$.
 Hence,
 $V'\cap e_\ell=\emptyset$,
 a contradiction.
 
 \LD{}
 Let~$F$ be a solution to~$I$.
 We know that~$H_i\in F$ for all~$i\in\set{n}$.
 We claim that~$V'=\{v_i\mid \{y,v_i\}\in F\}$ is a vertex cover of~$G$.
 Note that~$|V'|\leq p$ since~$|F\setminus \bigcup_{v_i\in V} \{\{x,v_i\}\}|\leq p$.
 Suppose the claim is false,
 that is,
 there is an edge~$e_\ell=\{v_i,v_j\}\in E$ with~$e_\ell\cap V'=\emptyset$.
 By construction of~$V'$,
 we have that each of~$\{y,v_i\}$ and~$\{y,v_j\}$ are not in~$F$.
 Hence the habitat~$H_\ell'$ is not connected,
 a contradiction.
 
 For the \NP-hardness for only~$C_\ell$-habitats with even~$\ell\geq 6$ or~$\ell=4$,
 replace every edge~$e=\{v_i,x\}$ by an~$(\ell/2-2,\ell/2)$-crowning.
 For the \NP-hardness for~$C_\ell$-habitats with odd~$\ell\geq 7$,
 replace every edge~$e=\{v_i,x\}$ by two crownings,
 an~$((\ell-1)/2-2,(\ell+1)/2)$-crowning and an~$((\ell+1)/2-2,(\ell-1)/2)$-crowning.
 Adjust~$\calH$ and~$k$ accordingly.
\end{proof}

\begin{proof}[Proof of \cref{thm:lbs}(iii)]
 Let~$I=(G,p)$ be an instance of CVC,
 and let~$I'=(G',\calH,k)$ be an instance of \RgbpAcr{} obtained from~$I$ using~\cref{constr:deg}.
 We claim that~$I$ is a \yes-instance 
 if and only if
 $I'$ is a \yes-instance.
 
 \RD{}
 Let~$V'\subseteq V$ be a vertex cover of size at most~$p$.
 We claim that~$F=E\cup E^*\cup \bigcup_{v_i\in V'} \{\{v_i^*,v_i\}\}$ is a solution to~$I'$.
 Note that~$|F|\leq 2m+p$ and~$H_i,H_i^*\in F$ for all~$i\in\set{m}$.
 Suppose the claim is false,
 that is,
 there is a habitat~$H_\ell'\in\calH$ that is not connected.
 Since~$\{\{v_i,v_j\},\{v_i^*,v_j^*\}\}\subseteq F$,
 neither the edges~$\{v_i^*,v_i\}$ nor~$\{v_j^*,v_j\}$ are in~$F$.
 Hence,
 $V'\cap e_\ell=\emptyset$,
 a contradiction.
 
 \LD{}
 Let~$F$ be a solution to~$I$.
 We know that~$H_i,H_i^*\in F$ for all~$i\in\set{m}$.
 We claim that~$V'=\{v_i\mid \{v_i,v_i^*\}\in F\}$ is a vertex cover of~$G$.
 Note that~$|V'|\leq p$ since~$|F\setminus (E\cup E^*)|\leq p$.
 Suppose the claim is false,
 that is,
 there is an edge~$e_\ell=\{v_i,v_j\}\in E$ with~$e_\ell\cap V'=\emptyset$.
 By construction of~$V'$,
 we have that each of~$\{v_i^*,v_i\}$ and~$\{v_j^*,v_j\}$ are not in~$F$.
 Hence the habitat~$H_\ell'$ is not connected,
 a contradiction.
 
 For the \NP-hardness for~$\{P_2,C_\ell\}$-habitats with~$\ell\geq 4$,
 subdivide each edge~$\ell-4$ times.
 For the \NP-hardness for only~$C_\ell$-habitats with~$\ell\geq 4$,
 replace every edge~$e\in E$ by an~$(\ell-4,2)$-crowning
 and every edge~$e^*\in E^*$ by an~$(0,\ell-2)$-crowning.
 Note that this increases the maximum degree by six to at most~ten.
 Adjust~$\calH$ and~$k$ accordingly.
\end{proof}

\begin{proof}[Proof of \cref{thm:lbs}(iv)]
 Let~$I=(G,p)$ be an instance of \vcAcr{},
 and let~$I'=(G',\calH,k)$ be an instance of \RgbpAcr{} obtained from~$I$ using~\cref{constr:bintree}.
 It is not difficult to see that~$\Delta(G)\leq 3$ and
 that each habitat induces either a~$P_2$ or a cycle.
 We claim that~$I$ is a \yes-instance 
 if and only if
 $I'$ is a \yes-instance.
 For notation,
 let~$E_T\ceq E(T)\cup E(T')$.
 
 \RD{}
 Let~$V'\subseteq V$ be a vertex cover of size at most~$p$.
 We claim that~$F=\bigcup_{e\in E_T} \{e\} \cup \bigcup_{v_i\in V'} \{\{w_i,w_i'\}\}$ is a solution to~$I'$.
 Note that~$|F|\leq |E_T|+p$ and~$H_e\in F$ for all~$e\in E_T$.
 Suppose the claim is false,
 that is,
 there is a habitat~$H_\ell'\in\calH$ corresponding to~$e_\ell=\{v_i,v_j\}$ that is not connected.
 Thus,
 each of the edges~$\{w_i,w_i'\}$ and~$\{w_j,w_j'\}$ is not in~$F$.
 Hence,
 $V'\cap e_\ell=\emptyset$,
 a contradiction.
 
 \LD{}
 Let~$F$ be a solution to~$I$.
 We know that~$H_e\in F$ for all~$e\in E_T$.
 We claim that~$V'=\{v_i\mid \{w_i,w_i'\}\in F\}$ is a vertex cover of~$G$.
 Note that~$|V'|\leq p$ since~$|F\setminus E_T|\leq p$.
 Suppose the claim is false,
 that is,
 there is an edge~$e_\ell=\{v_i,v_j\}\in E$ with~$e_\ell\cap V'=\emptyset$.
 By construction of~$V'$,
 we have that each of~$\{w_i,w_i'\}$ and~$\{w_j,w_j'\}$ are not in~$F$.
 Hence the habitat~$H_\ell'$ is not connected,
 a contradiction.
 
 For the \NP-hardness for only cycle-habitats,
 replace every edge~$e\in E_T$ by a~$(0,1)$-crowning on~$e$'s endpoints.
 Note that this increases the maximum degree by~six to at most~nine.
 Adjust~$\calH$ and~$k$ accordingly.
\end{proof}
\section{Upper Bounds}
\label{sec:ubs}
\ifshort{}\appendixsection{sec:ubs}\fi{}

This section is devoted to instances of \RwgbpAcr{} in which every habitat induces a cycle.
We will first show that this case can be reduced to the following problem.

\decprob{\mchmTsc{} (\mchmAcr{})}{mchm}
{A hypergraph~$G=(V,E)$ with edge weights~$w\colon E\to \N$.}
[Task]{Find a set~$M\subseteq E$ of maximum weight such that for all~$e,e'\in M$ holds that~$e\cap e'=\emptyset$.}

\noindent
\mchmAcr{} is \NP-hard \cite{GareyJ79}, but if every hyperedge is of cardinality at most two,
it is equivalent to the well-known \mcmTsc{} (\mcmAcr{}) problem which is solvable in~$O(|V|(|E|+\log |V|))$ time \cite{Gabow90}.
We make use of this to prove that some special cases of \RwgbpAcr{} are polynomial-time solvable.

\subsection{The general case for cycles}
In this subsection we show the following.

\begin{proposition}%
 \label{prop:habindface}
 \RwgbpAcr{} where every habitat induces a cycle 
 can be decided by solving \mchmAcr{} 
 where the largest hyperedge is of size of the largest number of habitats intersecting in one edge.
\end{proposition}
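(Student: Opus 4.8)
The plan is to reduce to \mchmAcr{} by deciding, for each habitat, which (if any) single cycle edge to leave out of the solution.

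First I would establish the key structural fact. Fix a habitat $H \in \calH$; by assumption $G[H]$ is a cycle on $|H|$ vertices and hence has exactly $|H|$ edges. An edge-subset of a cycle spanning all its vertices is connected if and only if it is either the whole cycle or the cycle minus a single edge. Since the edges of $F$ lying inside $H$ are precisely $F \cap E(G[H])$, and connectivity of $G[F][H]$ depends only on these, $F$ satisfies $H$ exactly when it contains every edge of $E(G[H])$ but at most one. Moreover, writing $U \ceq \bigcup_{H \in \calH} E(G[H])$ for the set of all edges lying on some habitat cycle, no optimal solution needs an edge outside $U$: such an edge is irrelevant to every $G[F][H]$ and only adds to the cost. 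Hence we may restrict attention to solutions $F \subseteq U$, and $F = U$ is always a feasible (if expensive) solution.

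Next I would reformulate the problem in terms of the edges we \emph{drop}. For $F \subseteq U$ write $D \ceq U \setminus F$; then $c(F) = c(U) - c(D)$, so minimizing $c(F)$ is equivalent to maximizing $c(D)$. By the structural fact, $F$ is a valid solution exactly when each habitat cycle loses at most one of its edges, i.e.\ $|D \cap E(G[H])| \le 1$ for every $H \in \calH$. The crucial point is that dropping a shared edge $e$ counts against \emph{all} habitats whose cycle contains $e$ at once: if two dropped edges both lay on the cycle of a common habitat $H$, then $H$ would lose two edges and become disconnected. So a feasible drop set is precisely a set $D \subseteq U$ in which any two edges lie on the cycles of disjoint collections of habitats.

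This last condition is a hypergraph matching. I would build an auxiliary multi-hypergraph $\mathcal{G}$ whose vertex set is the habitat set $\{H_1,\dots,H_r\}$ and which contains, for every edge $e \in U$, a hyperedge $\widehat{e} \ceq \{H \in \calH : e \in E(G[H])\}$ of weight $w(\widehat{e}) \ceq c(e)$ (parallel hyperedges arise when distinct edges lie on exactly the same set of habitat cycles, which is why a multi-hypergraph is needed). A set of hyperedges is pairwise disjoint---a matching---if and only if the corresponding edges form a feasible drop set, and its weight equals $c(D)$. Consequently a maximum-weight matching $M$ yields the maximum droppable cost, the optimum solution value is $c(U) - w(M)$, and the instance is a \yes-instance if and only if $c(U) - w(M) \le k$. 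The largest hyperedge has size $\max_{e \in U} |\{H \in \calH : e \in E(G[H])\}|$, exactly the largest number of habitats intersecting in one edge, as claimed. The part needing the most care is this equivalence between the combinatorial ``at most one dropped edge per habitat'' constraint and hyperedge disjointness: everything hinges on the fact that an edge, once dropped, is removed from every habitat simultaneously, so that the overlap of two dropped edges in a common habitat is the only obstruction---precisely the pairwise-disjointness of their habitat sets. The remaining verifications are routine.
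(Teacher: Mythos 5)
Your proposal is correct and follows essentially the same route as the paper: you pass to the complement (the set of dropped edges), observe that a spanning cycle stays connected iff at most one edge is removed, and encode this as maximum-weight matching in a multi-hypergraph whose vertices are the habitats and whose hyperedges record, for each graph edge, the set of habitats containing it---this is exactly the paper's habitat graph (\cref{def:habgraph}) and \cref{lem:matchsol}, up to the cosmetic difference that the paper pads edges lying in a single habitat with a dummy vertex to make their hyperedges have cardinality two. Your explicit handling of edges outside every habitat and the identity $\mathrm{OPT}=c(U)-w(M)$ are fine and match the paper's intent.
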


\begin{remark}
\mchmAcr{} admits an ILP formulation with 
linearly many variables and constraints
(used for our experiments):
\begin{equation}
\begin{aligned}\label{eq:A}
 \max && \sum_{e\in E(H)} c(e)\cdot x_e \\
 \text{s.t.} && x_e&\in\{0,1\} && \forall e\in E(H) \\
 &&  \sum_{e\in E(H):\: v\in e} x_e &\leq 1 && \forall v\in V(H)
\end{aligned}
\end{equation}
\end{remark}

\noindent
Central for the translation to \mchmAcr{} is the following graph.

\begin{definition}[Habitat graph]
\label{def:habgraph}
Let~$G=(V,E)$ be a graph with edge cost~$c\colon E\to \N$ and~$\calH$ be a set of habitats each inducing a cycle.
The multi-hypergraph~$B=(V_B,E_B)$ with edge weights~$w_B\colon E_B\to\N$ and bijection~$f:E\to E(B)$ are obtained as follows.
$B$ contains a vertex~$b_i$ for each habitat~$H_i$.
For every edge~$e\in E$ shared by at least two habitats~$H_{i_1},\dots,H_{i_j}$,
add a hyperedge~$e'=\{b_{i_1},\dots,b_{i_j}\}$
and set~$f(e)\ceq e'$ and~$w_B(e')\ceq c(e)$. 
Finally,
for every edge~$e\in E$ induced by only one habitat~$H_i$,
add a vertex~$b_e$ and the edge~$e'\ceq \{b_e,b_i\}$,
and set~$f(e)\ceq e'$ and~$w_B(e')\ceq c(e)$.
\end{definition}

\noindent
The following connection between \RwgbpAcr{} 
and \mcmAcr{} proves~\cref{prop:habindface}.

\begin{lemma}
 \label{lem:matchsol}
 Let~$G=(V,E)$ be a graph with edge cost $c\colon E\to \N$, let~$\calH$ be a set of habitats each inducing a cycle,
 and let~$B$ denote the habitat graph with edge weights $w_B\colon E_B\to\N$ and function~$f$.
 \begin{inparaenum}[(i)]
 \item If $M$ is a matching in~$B$,
 then
 $G[E\setminus f^{-1}(M)][H]$ is connected for every~$H\in\calH$;
 \item If $G[F][H]$ is connected for every~$H\in\calH$,
 then~$E(B)\setminus f(F)$ is a matching in~$B$.
 \end{inparaenum}
\end{lemma}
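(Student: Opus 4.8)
The plan is to reduce both parts to a single combinatorial equivalence, phrased in terms of the set $D \ceq E \setminus F$ of \emph{dropped} edges. Since $f$ is a bijection, the matching candidate $E(B)\setminus f(F)$ equals $f(D)$, and in part~(i) we have $f^{-1}(M)=D$. Hence the whole statement follows once I relate the two properties ``$f(D)$ is a matching in $B$'' and ``$D$ contains at most one edge of each habitat cycle''.

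First I would record two elementary facts. \textbf{(Construction fact.)} By \cref{def:habgraph}, for a habitat-vertex $b_i$ and an edge $e\in E$ we have $b_i\in f(e)$ if and only if $e\subseteq H_i$; moreover each auxiliary vertex $b_e$ occurs in the single hyperedge $f(e)$ only. Consequently, for distinct $e,e'\in E$, the hyperedges $f(e)$ and $f(e')$ intersect if and only if some habitat contains both $e$ and $e'$. \textbf{(Cycle fact.)} Since each $G[H]$ is a cycle on $|H|\ge 3$ vertices, the induced subgraph $G[F][H]$, whose edge set consists exactly of the kept cycle edges, is connected if and only if at most one edge of $G[H]$ lies in $D$: removing no or one edge from a cycle yields the cycle itself or a Hamiltonian path, both connected and spanning $H$, whereas removing two or more edges disconnects it.

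With these in hand, part~(i): take $D=f^{-1}(M)$ and argue habitat by habitat. As $M$ is a matching, each habitat-vertex $b_i$ is covered by at most one hyperedge of $M$, so by the construction fact at most one edge $e\subseteq H_i$ lies in $D$; the cycle fact then gives that $G[F][H]$ is connected. Part~(ii) reverses this: connectivity of $G[F][H]$ forces, by the cycle fact, that at most one edge of each cycle $G[H]$ is dropped, i.e.\ no habitat contains two edges of $D$; by the construction fact no two hyperedges of $f(D)$ then share a vertex, so $f(D)=E(B)\setminus f(F)$ is a matching in $B$.

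The only point requiring care---and the one place the construction genuinely earns its keep---is the handling of single-habitat edges: one must confirm that the auxiliary vertices $b_e$ are private to their own hyperedge, so that they never produce a spurious intersection spoiling the matching property; this is immediate from the construction. Beyond this bookkeeping I do not expect any real obstacle, as both directions are a direct translation through $f$ once the construction fact and the cycle fact are established.
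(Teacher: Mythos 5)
Your proposal is correct and follows essentially the same route as the paper's proof: both directions reduce to the observation that a matching covers each habitat-vertex $b_i$ at most once, which via the bijection $f$ corresponds to dropping at most one edge of the cycle $G[H_i]$, and a cycle minus at most one edge remains connected. Your explicit ``construction fact'' and ``cycle fact'' are merely a more detailed write-up of the steps the paper leaves implicit.
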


\begin{proof}
 (i) 
 Let~$M$ be a matching in~$B$ and let~$F\ceq E\setminus f^{-1}(M)$. 
 Then for every~$b_i\in V(B)$ there is at most one edge in~$M$ that is incident with~$b_i$.
 Thus,
 for every~$H\in \calH$,
 $|E(G[H])\cap F|\geq |H|-1$, 
 and hence~$G[F][H]$ is connected.
 
 (ii)
 Let $G[F][H]$ be connected for every~$H\in\calH$
 and let~$M\ceq E(B)\setminus f(F)$.
 Suppose there are two edges in~$M$ that are both incident to some~$b_i\in V(B)$.
 Then there are two edges in~$E(G[H_i])$ not contained in~$F$,
 and hence~$G[H_i][F]$ is not connected---a contradiction.
 Thus,
 $M$ is a matching.
\end{proof}

\appendixproof{prop:habindface}
{
\begin{proof}[Proof of~\cref{prop:habindface}]
 Due to~\cref{lem:matchsol}(i),
 we know that every matching forms a solution.
 With the addition of~\cref{lem:matchsol}(ii),
 we know that every maximum-weight matching forms a minimum-cost solution.
\end{proof}
}

\begin{remark}
 We can simplify the habitat graph to a simple hypergraph:
 If there are multiple edges with the same vertex set,
 then it is enough to keep exactly one of maximum weight.
 We will make use of this in our experiments.
\end{remark}

\subsection{Polynomial-time solvable subcases}

If every habitat induces a cycle
and every edge is in at most two habitats, 
then the 
habitat graph
is a hypergraph with edges of cardinality at most two.
We have the following.

\begin{theorem}
 \label{thm:habindface}
 \RwgbpAcr{} where every habitat induces a cycle is solvable in~$O(|V|\cdot |E|\cdot |\calH|)$ time
 when every edge is in at most two habitats.
\end{theorem}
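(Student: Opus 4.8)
The plan is to reduce the problem to \mcmAcr{} via the machinery already established in this subsection, and then invoke the known polynomial-time matching algorithm. First I would observe that since every habitat induces a cycle, \cref{prop:habindface} applies: we may decide \RwgbpAcr{} by solving \mchmAcr{} on the habitat graph~$B$ (\cref{def:habgraph}). The crucial point is the size of the hyperedges. By \cref{prop:habindface}, the largest hyperedge of~$B$ has cardinality equal to the largest number of habitats intersecting in a single edge. Under the hypothesis that every edge is in at most two habitats, every hyperedge of~$B$ therefore has cardinality at most two, so~$B$ is an ordinary (multi-)graph and \mchmAcr{} on~$B$ is exactly \mcmAcr{}.

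Next I would apply the structural correspondence of \cref{lem:matchsol}: a set~$F\subseteq E$ satisfies every habitat if and only if~$E(B)\setminus f(F)$ is a matching in~$B$, and conversely every matching~$M$ yields the satisfying solution~$F=E\setminus f^{-1}(M)$. Since~$f$ is a cost-preserving bijection with~$w_B(f(e))=c(e)$, minimizing~$c(F)$ over satisfying sets~$F$ is equivalent to maximizing~$w_B(M)$ over matchings~$M$ in~$B$; hence a maximum-weight matching in~$B$ translates back into a minimum-cost solution~$F^\star=E\setminus f^{-1}(M^\star)$. We then simply check whether~$c(F^\star)\leq k$ to decide the instance (or report~$F^\star$ for the optimization version).

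It remains to account for the running time. Constructing~$B$ from~$G$ and~$\calH$ takes linear time in the input: each edge~$e\in E$ contributes exactly one edge to~$B$ (a hyperedge of size at most two here), and deciding, for each~$e$, which habitats contain it and assigning~$w_B(f(e))=c(e)$ is straightforward. The graph~$B$ has~$|E(B)|=|E|$ edges and~$|V(B)|=O(|\calH|+|E|)$ vertices. Invoking the \mcmAcr{} algorithm of \citet{Gabow90}, which runs in~$O(|V_B|(|E_B|+\log|V_B|))$ time, and translating the matching back via~$f^{-1}$ in linear time, yields the claimed bound. The main thing to verify carefully is the running-time bookkeeping: one should confirm that the number of habitats containing a fixed edge, summed over all edges, stays within the stated~$O(|V|\cdot|E|\cdot|\calH|)$ budget, and that~$|V_B|$ is dominated so that Gabow's bound collapses into the stated product. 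There is no genuine conceptual obstacle here; the real work was already done in \cref{lem:matchsol} and \cref{prop:habindface}, and the theorem is essentially the specialization of that reduction to the case where the hypergraph degenerates to a graph.
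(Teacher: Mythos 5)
Your proposal is correct and follows essentially the same route as the paper: under the hypothesis the habitat graph has only hyperedges of cardinality at most two, so \cref{prop:habindface} and \cref{lem:matchsol} reduce the problem to \mcmAcr{}, which is solved with Gabow's algorithm within the stated time bound. The paper leaves exactly this argument implicit in the sentence preceding the theorem, so there is nothing to add beyond your (correctly flagged) bookkeeping of $|V(B)|$ and $|E(B)|$.
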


\noindent
We next present two special cases of \RwgbpAcr{} that become polynomial-time solvable due to the above.

\paragraph{Habitats inducing faces.}

Suppose our input graph is a plane graph 
(that is, a planar graph together with an crossing-free embedding into the plane).
If every habitat induces a cycle which is the boundary of a face,
then clearly every edge is shared by at most two habitats since
every edge is incident with exactly two faces.
Thus, we get the following.

\begin{corollary}
 \label{cor:habindface}
 \RwgbpAcr{} where every habitat induces a cycle is solvable in~$O(|V|\cdot |E|\cdot |\calH|)$ time
 on plane graphs 
 when every habitat additionally induces a face.
\end{corollary}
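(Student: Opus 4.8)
The plan is to derive this directly from \cref{thm:habindface} by checking that its hypothesis holds under the stated assumptions, namely that every edge lies in at most two habitats. The entire content therefore reduces to a short planarity argument; once the hypothesis is verified, the running time and the correctness are inherited verbatim from \cref{thm:habindface} (whose habitat graph then has all hyperedges of cardinality at most two and thus reduces to \mcmAcr{}).

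First I would recall the standard fact from topological graph theory that, for a fixed crossing-free embedding, every edge of a plane graph borders exactly two faces, with the caveat that a bridge borders a single face traversed on both sides. I would then argue that the edges we actually care about are never bridges: if a habitat~$H_i$ induces a cycle, then every edge of~$G[H_i]$ lies on a cycle of~$G$ and hence is not a bridge of~$G$. Consequently each such edge genuinely separates two \emph{distinct} faces.

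Next I would translate this into a bound on habitat multiplicity. Since each habitat~$H_i$ induces a cycle that is the boundary of some face~$f_i$, the edge set of~$G[H_i]$ coincides with the boundary edges of~$f_i$, and distinct habitats necessarily correspond to distinct faces (the vertex set of a face-bounding simple cycle is determined by the face, so two habitats bounding the same face would be equal). An edge~$e$ is shared by~$H_i$, i.e.\ $e\subseteq H_i$ with $e\in E(G[H_i])$, precisely when~$e$ lies on the boundary of~$f_i$. As established above, $e$ lies on the boundary of at most two faces, so~$e$ belongs to at most two habitats, which is exactly the precondition of \cref{thm:habindface}.

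Finally, invoking \cref{thm:habindface} on the instance yields a solution in~$O(|V|\cdot|E|\cdot|\calH|)$ time, proving the corollary. The only point requiring genuine care is the treatment of bridges hidden in the phrase ``every edge is incident with exactly two faces''; resolving it via the observation that habitat edges lie on cycles (and are thus non-bridges) is the one small step that is not an immediate appeal to \cref{thm:habindface}.
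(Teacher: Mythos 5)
Your proposal is correct and follows essentially the same route as the paper: the paper derives \cref{cor:habindface} from \cref{thm:habindface} by noting that every edge of a plane graph is incident with at most two faces, so face-inducing habitats share each edge at most twice. Your extra care about bridges (observing that habitat edges lie on cycles and hence border two distinct faces) is a harmless refinement of the same argument, not a different approach.
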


\paragraph{Habitats inducing triangles in graphs of maximum degree three.}

Suppose our input graph has maximum degree three and each habitat induces a triangle.
Observe that every vertex of degree at most one cannot be contained in a habitat.
Moreover,
every degree-two vertex is contained in at most one habitat.
For degree-three vertices we have the following.

\begin{lemma}%
 \label{lem:Kfour}
 If a vertex~$v$ is contained in three habitats,
 then~$N_G[v]$ is a connected component isomorphic to a~$K_4$.
\end{lemma}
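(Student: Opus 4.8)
The plan is to argue purely by degree counting, using that $\Delta(G)\le 3$ and that every habitat induces a triangle ($C_3$). First I would observe that every habitat containing~$v$ is a triangle $\{v,x,y\}$ with $x,y\in N_G(v)$ and $\{x,y\}\in E$; hence each such habitat is determined by an unordered pair of distinct, mutually adjacent neighbors of~$v$. Since the three habitats containing~$v$ are distinct as vertex sets, they yield three distinct such pairs, so $\binom{|N_G(v)|}{2}\ge 3$, forcing $|N_G(v)|\ge 3$. Together with $\Delta(G)\le 3$ this gives $|N_G(v)|=3$; write $N_G(v)=\{a,b,c\}$.

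Next I would note that with only three neighbors the only three candidate triangles through~$v$ are $\{v,a,b\}$, $\{v,a,c\}$, and $\{v,b,c\}$, so these must be exactly the three habitats. Each of them being a triangle forces the edges $\{a,b\}$, $\{a,c\}$, and $\{b,c\}$ to be present, which together with $\{v,a\}$, $\{v,b\}$, $\{v,c\}$ shows that $G[\{v,a,b,c\}]$ is isomorphic to~$K_4$.

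Finally, to upgrade ``$N_G[v]$ induces a $K_4$'' to ``$N_G[v]$ is a connected component'', I would invoke the degree bound once more: within this $K_4$ each of $a$, $b$, $c$ is already adjacent to the other three vertices of $\{v,a,b,c\}$ and thus has degree exactly three, so by $\Delta(G)\le 3$ none of them has a neighbor outside the set; and $v$ has all three of its neighbors inside the set. Hence no edge leaves $N_G[v]=\{v,a,b,c\}$, so it is a connected component isomorphic to~$K_4$.

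I do not expect any real obstacle here. The only point needing care is the distinctness of the three habitats: the argument relies on the three triangles through~$v$ being pairwise different as vertex sets, which is precisely what lets a mere count of three triangles saturate all $\binom{3}{2}$ pairs and force every cross-edge to appear. Everything else is a direct consequence of the maximum-degree-three assumption.
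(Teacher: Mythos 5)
Your proof is correct and follows essentially the same counting argument as the paper: the three distinct triangle-habitats through~$v$ must exhaust all $\binom{3}{2}$ pairs of neighbors, forcing $G[N_G[v]]\cong K_4$, and the degree bound~$\Delta(G)\le 3$ then seals off the component. You are in fact slightly more explicit than the paper in first deriving $|N_G(v)|=3$ rather than assuming it, which is a welcome clarification but not a different approach.
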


\appendixproof{lem:Kfour}
{
\begin{proof}
 Firstly,
 observe that~$6$ slots are distributed among 3 vertices,
 and hence,
 for every~$w\in N(v)$
 there are two distinct habitats~$H, H'$ with~$\{v,w\}\subseteq H\cap H'$.
 This means that~$|N(w)\cap N(v)|=2$.
 Hence,
 each vertex~$G[N[v]]$ has degree three,
 and thus $N_G[v]$ is a connected component.
\end{proof}
}

\noindent
We immediately derive the following data reduction rule.

\begin{rrule}
If a vertex~$v$ is contained in three habitats,
then delete~$N_G[v]$ and reduce~$k$ by the minimum cost of a solution for~$G[N_G[v]]$.
\end{rrule}

\noindent
If the reduction rule is inapplicable,
then every vertex is contained in at most two habitats.
Consequently, every edge is shared by at most two habitats.
We obtain the following.

\begin{corollary}
 \label{cor:deg3}
 \RgbpAcr{} on graphs of maximum degree three is solvable in~$O(|V|\cdot |E|\cdot |\calH|)$ time
 when every habitat induces a triangle.
\end{corollary}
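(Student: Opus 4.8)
The plan is to exhaust the data reduction rule and then invoke \cref{thm:habindface} on what remains. First I would apply the rule repeatedly. By \cref{lem:Kfour}, every time a vertex~$v$ is contained in three habitats its closed neighborhood~$N_G[v]$ is a $K_4$ that forms an entire connected component. Since each habitat induces a triangle and is therefore connected, no habitat can straddle the boundary of this component, so the subinstance induced on~$N_G[v]$ is independent of the rest of the graph: its optimal cost is computable in constant time (a $K_4$ has six edges and at most four triangle-habitats), and these optimal costs simply add to that of the residual instance. Thus deleting all such components while decreasing~$k$ by their optimal costs is safe and loses no optimality.

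Once the rule no longer applies, no vertex lies in three habitats. Together with the preliminary observations---vertices of degree at most one lie in no habitat and those of degree two in at most one---this gives that every vertex lies in at most two habitats. Because an edge~$\{u,v\}$ is contained only in habitats that contain~$u$, every edge is then shared by at most two habitats. As every remaining habitat still induces a cycle (a triangle), the residual instance satisfies precisely the hypotheses of \cref{thm:habindface}; since \RgbpAcr{} is the unit-cost special case of \RwgbpAcr{}, that theorem applies and solves the residual instance in~$O(|V|\cdot|E|\cdot|\calH|)$ time. Outputting the union of the per-component $K_4$ solutions and the solution returned for the residual instance then yields a minimum-cost solution for the original instance.

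For the running time I would observe that the reduction phase is cheap: computing, for every vertex, the number of habitats it belongs to costs $O(\sum_{H\in\calH}|H|)=O(|\calH|)$ because each habitat is a triangle, and detecting and optimally solving the constant-size $K_4$ components costs $O(|V|)$ in total; hence the whole procedure is dominated by the single call to \cref{thm:habindface} and runs in~$O(|V|\cdot|E|\cdot|\calH|)$ time. Overall the statement is an assembly of \cref{lem:Kfour}, the data reduction rule, and \cref{thm:habindface}, so I do not anticipate a serious obstacle; the only steps needing care are justifying that the $K_4$ subinstances may be solved independently---which rests on their being full connected components that no habitat crosses---and checking that the preprocessing does not exceed the stated time bound.
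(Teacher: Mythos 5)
Your proposal is correct and follows essentially the same route as the paper: exhaustively apply the reduction rule (justified by \cref{lem:Kfour}), conclude that afterwards every vertex and hence every edge lies in at most two habitats, and finish with \cref{thm:habindface}. The extra details you supply --- the safety of deleting the $K_4$ components as independent subinstances and the cost of the preprocessing --- are exactly the steps the paper leaves implicit, and they check out.
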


\section{Experiments}
\label{sec:experiments}
\ifshort{}\appendixsection{sec:experiments}\fi{}

In this section,
we present and discuss our experimental and empirical evaluation.
We explain our data in~\cref{ssec:data},
our algorithms in~\cref{ssec:algs},
and our results in~\cref{ssec:results}.

\subsection{Data}
\label{ssec:data}

\paragraph{Graphs.}
Our experiments are conducted on planar graphs only.
We used data freely available by Open Street Maps (OSM).
For each state of Germany (except for the city states Berlin, Bremen and Hamburg;
abbreviated by {ISO} 3166 code),
we set a bounding box and extracted the highways within.
For each area encapsulated by highways, we created a vertex.
Two vertices are connected by an edge whenever they are adjacent by means of a highway.
Additionally,
we generated five artifical graphs which are relative neighborhood graphs~\cite{Toussaint80}
of sets of $500+i\cdot1125$~points in the plane, placed uniformly at random,
with~$i\in\set[0]{4}$.
To all graphs we randomly assigned edge costs from~$\{1,\dots,8\}$.
\cref{tab:summary_prop}
provides an overview over some instances' properties.

\begin{table}[t]\centering
 \caption{Properties of our real-world and artificial graphs.}
 \begin{tabular}{@{}r|rrrrrr@{}}\toprule
 & $|V|$ & $|E|$ & $|$faces$|$ & $\delta$ & $2|E|/|V|$ & $\Delta$ \\\midrule\midrule
BB & 129 & 293 & 165 & 2 & 4.543 & 11\\
BW & 224 & 508 & 285 & 1 & 4.536 & 15\\
BY & 284 & 651 & 368 & 1 & 4.585 & 11\\
HE & 113 & 245 & 133 & 1 & 4.336 & 13\\
MV & 74 & 159 & 86 & 2 & 4.297 & 9\\
NI & 141 & 321 & 181 & 2 & 4.553 & 11\\
NW & 455 & 1035 & 581 & 1 & 4.549 & 14\\
RP & 138 & 295 & 158 & 1 & 4.275 & 13\\
SH & 49 & 98 & 50 & 1 & 4 & 8\\
SL & 65 & 133 & 69 & 1 & 4.092 & 12\\
SN & 96 & 211 & 116 & 1 & 4.396 & 10\\
ST & 133 & 302 & 170 & 1 & 4.541 & 10\\
TH & 94 & 205 & 112 & 1 & 4.362 & 9\\
\midrule
A500 & 500 & 629 & 130 & 1 & 2.516 & 4\\
A1625 & 1625 & 2038 & 414 & 1 & 2.508 & 4\\
A2750 & 2750 & 3458 & 709 & 1 & 2.515 & 4\\
A3875 & 3875 & 4871 & 997 & 1 & 2.514 & 4\\
A5000 & 5000 & 6315 & 1316 & 1 & 2.526 & 4\\
\bottomrule
 \end{tabular}
 \label{tab:summary_prop}
\end{table}

\paragraph{Habitats.}
We created multiple instances from every graph above by equipping them with different types and numbers of habitats.
We created three types of instances: \emph{face} instances, \emph{cycle} instances, and \emph{walk} instances.
Given a plane graph $G$, a number $r$ of habitats and, in the case of cycle and walk instances, a habitat size~$q$, the instances were created as follows.
\begin{asparadesc}
 \item[\rm\it Face instances:] Out of those faces of $G$ that induce cycles, randomly choose~$r$ faces as habitats.
 \item[\rm\it Cycle instances:]
List all induced cycles of length~$q \pm 1$.
For each such~$q$, randomly choose~$r$ of the cycles as habitats.
 \item[\rm\it Walk instances:]
Compute $r$ self-avoiding random walks on~$q' = q\pm1$ vertices, where~$q'$ is chosen uniformly at random.
Add the vertices of each walk to a habitat.
\end{asparadesc}

\noindent
For each instance type,
for each graph above,
for each $r \in \{50, 100, 150, 200\}$,
for each $q \in \{5, 7, \dots, 13\}$ in the case of cycle and walk instances,
we generated 5 instances.

We remark that the real-world graphs MV, SH, and SL did not have sufficiently many cycles of length~$q\pm1$ for~$q\in\{5,7\}$, $q\in\{5, 7, 9\}$, and $q=5$, respectively.
In this case, every cycle was chosen to be a habitat.

\cref{fig:osmgraph} is a drawing of the graph SL, based on the street network of Saarland, together with a set of cycle habitats.
\begin{figure}[t]
	\centering
	\def\svgwidth{\columnwidth}
	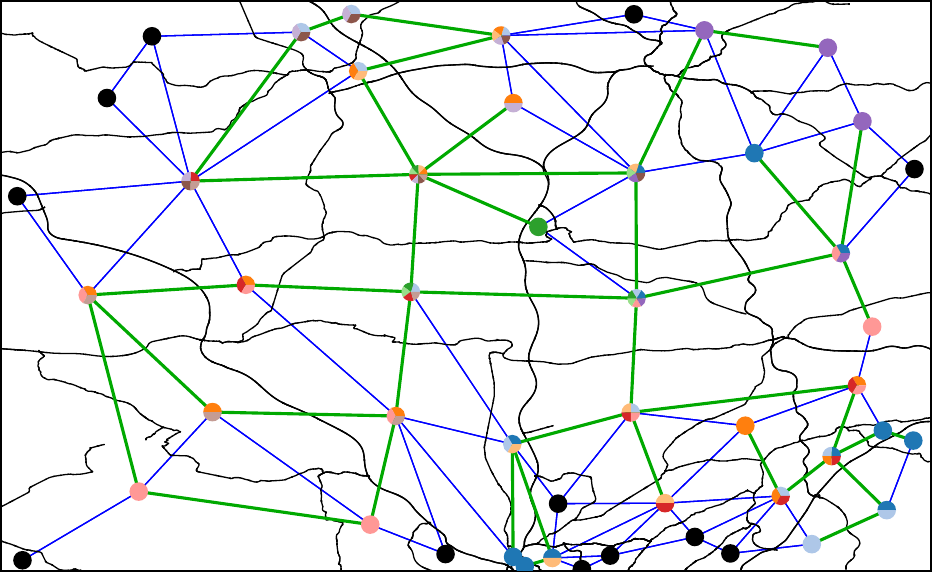
	\caption{Drawing of the graph SL (blue and green edges), based on the street network of Saarland (black).
		The graph is equipped with 21 cycle habitats, each of which is represented by a unique color.
	The green edges represent an optimum solution for the instance.}
	\label{fig:osmgraph}
\end{figure}

\subsection{Algorithms}
\label{ssec:algs}

\def\Amcm{\calA_{\mathrm{mwm}}}
\def\Amchm{\calA_{\mathrm{mwhm}}}
\def\Aapx{\calA_{\mathrm{apx}}}
\def\Agen{\calA_{\mathrm{gen}}}

\begin{table*}[t]\centering
 \caption{Summary of our results regarding~$\Aapx$.
 The quality ratio is cost($\Aapx$)/OPT.
 The additive ratio is (cost($\Aapx$)$-$OPT)/($d\cdot|\calH|$),
 where~$d$ is the average weight of an edge in an optimal solution.
 The running time ratio is time(BEST)/time($\Aapx$),
 where BEST is the exact solver with the best overall running time on the instance.}
 \begin{tabular}{@{}r|rrrr|rrrr|rrrr@{}}\toprule
  & \multicolumn{4}{c|}{quality ratio} & \multicolumn{4}{c|}{additive ratio} & \multicolumn{4}{c}{running time ratio} \\
  & min & max & mean & sd & min & max & mean & sd & min & max & mean & sd \\
  \midrule\midrule
  Faces$_{\mathrm{Art}}$ & 1 & 1.134 & 1.039 & 0.036 & 0 & 0.575 & 0.125 & 0.178 & 0.586 & 10.877 & 5.142 & 2.884\\
Cycles$_{\mathrm{Art}}$ & 1 & 1.119 & 1.038 & 0.026 & 0 & 0.724 & 0.234 & 0.151 & 0.583 & 19.957 & 8.265 & 4.694\\
Walk$_{\mathrm{Art}}$ & 1 & 1.044 & 1.008 & 0.009 & 0 & 0.31 & 0.053 & 0.056 & 2.835 & 4652.877 & 737.436 & 1157.331\\
\midrule
Faces$_{\mathrm{Real}}$ & 1.018 & 1.243 & 1.141 & 0.049 & 0.042 & 0.326 & 0.201 & 0.055 & 0.802 & 5.586 & 2.256 & 1.015\\
Cycles$_{\mathrm{Real}}$ & 1.026 & 1.313 & 1.16 & 0.041 & 0.03 & 0.853 & 0.29 & 0.147 & 1.515 & 22.083 & 4.648 & 2.105\\
Walk$_{\mathrm{Real}}$ & 1.016 & 1.348 & 1.174 & 0.054 & 0.007 & 1.27 & 0.34 & 0.233 & 5.456 & 5797.882 & 1266.843 & 1815.414\\
  \bottomrule
 \end{tabular}
 \label{tab:sumamry_apx}
\end{table*}

\begin{figure*}[t]
    \begin{subfigure}[c]{0.395\textwidth}
        \includegraphics[width=1\textwidth]{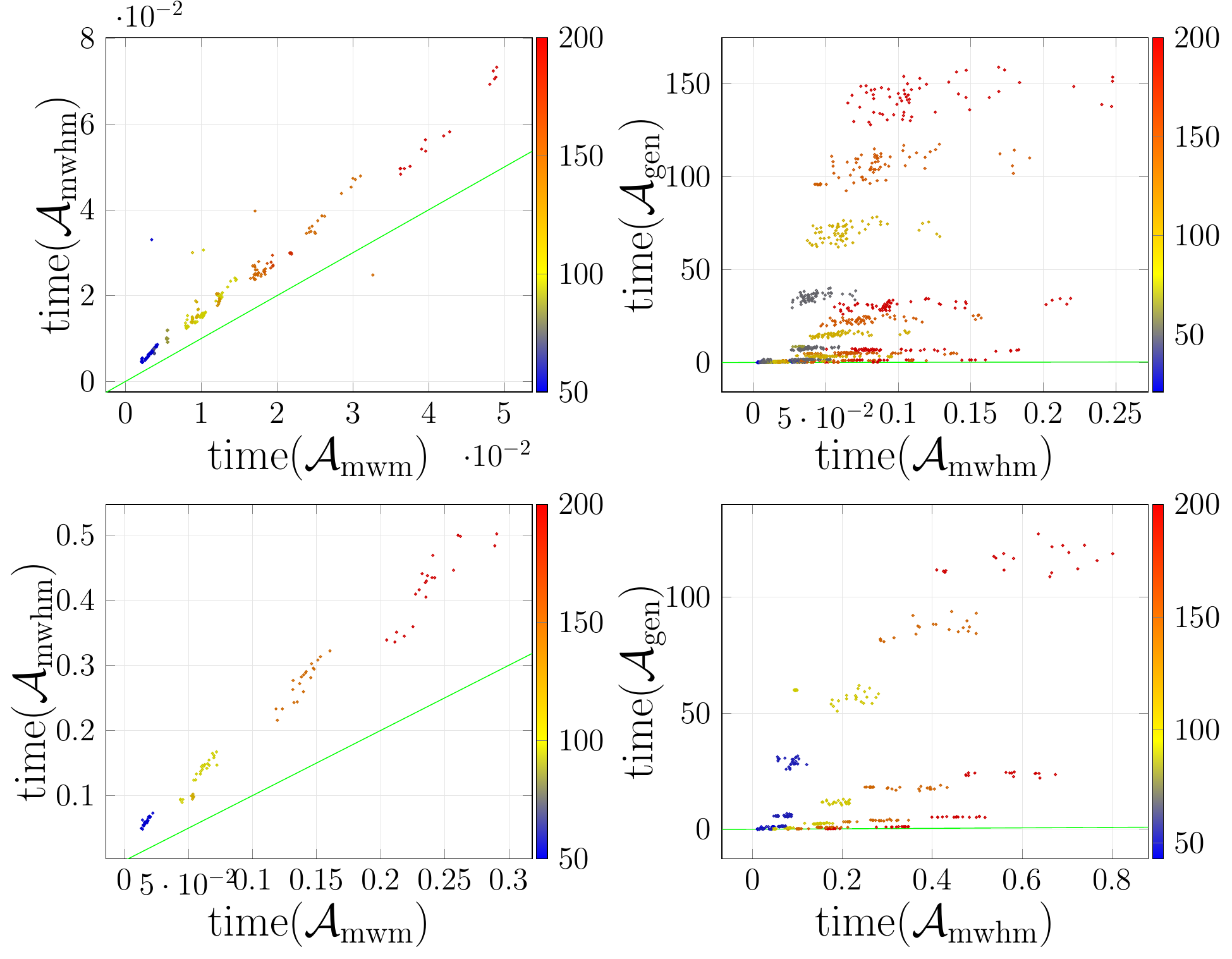}
        \subcaption{
        (Left) Faces.
        (Right) Cycles.}
    \end{subfigure}
    \hfill
    \begin{subfigure}[c]{0.595\textwidth}
        \includegraphics[width=1\textwidth]{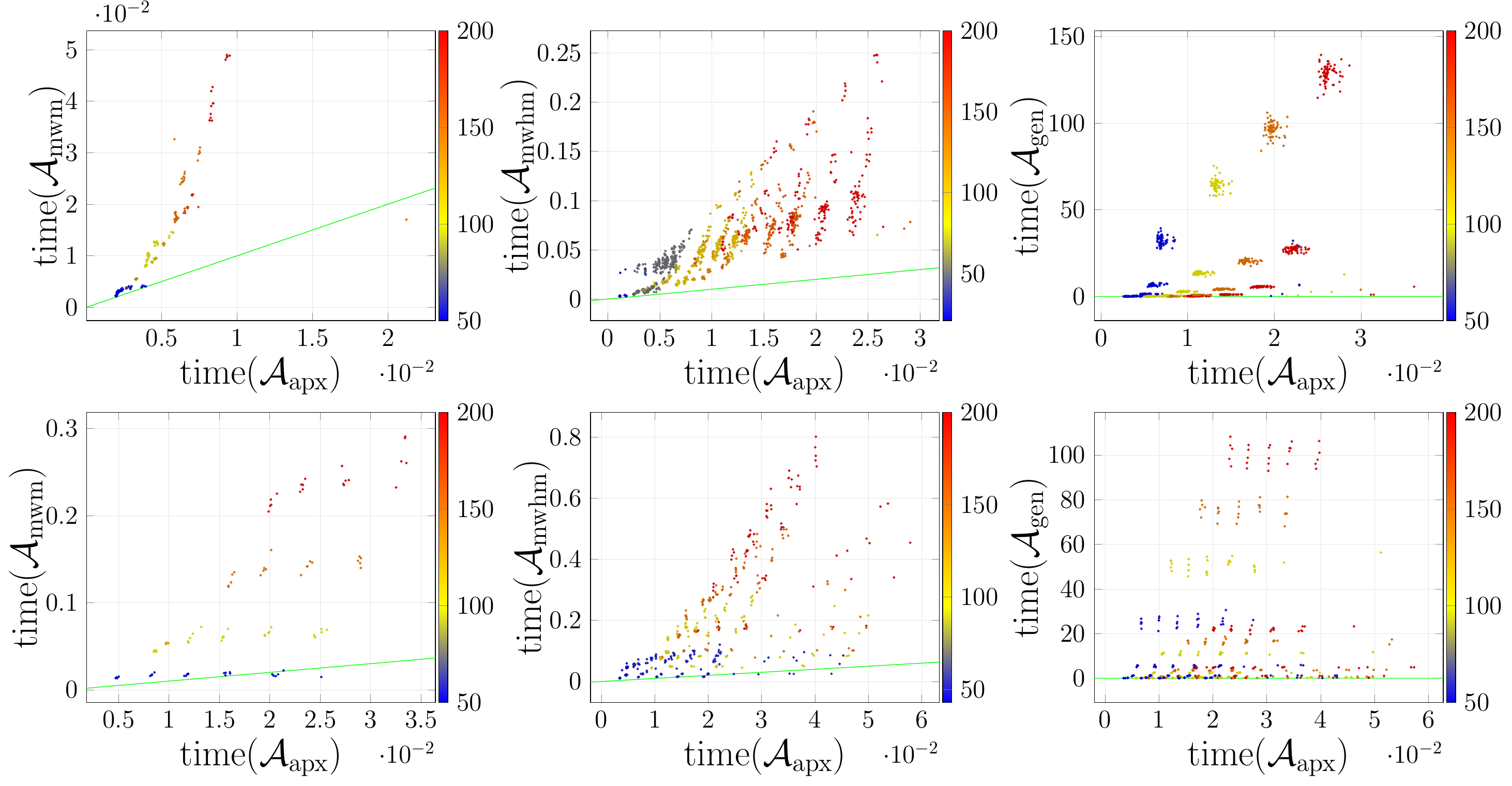}
        \subcaption{
        (Left) Faces.
        (Middle) Cycles.
        (Right) Walks.}
    \end{subfigure}
    \caption{(Top) Real-World.
    (Bottom) Artificial.
    Data points are colored by the number of habitats.
    (a) Running times of the two best exact solvers on the respective instance type.
    (b) Running times of~$\Aapx$ against the best exact solver on the respective instance type.
    }
    \label{fig:Best}
\end{figure*}

We implemented three exact solvers and one approximate solver.
Two of the three exact solvers can be run only on some types of habitats
(see \cref{tab:algs}).
\toappendix{
\def\Cmark{\checkmark}
\begin{table}[t]
 \centering
 \caption{Overview on algorithms and application domains.}
 \begin{tabular}{@{}rllll@{}}\toprule
    & $\Amcm$ & $\Amchm$ & $\Agen$ & $\Aapx$ \\\midrule\midrule
  Face habitats & \Cmark & \Cmark & \Cmark & \Cmark \\
  Cycle habitats & \xmark & \Cmark & \Cmark & \Cmark \\
  Walk habitats & \xmark & \xmark & \Cmark & \Cmark \\
  \bottomrule
 \end{tabular}
 \label{tab:algs}
\end{table}
}

For the face instances we implemented the \mcmAcr{}-based algorithm~(\cref{cor:habindface}) which we will denote by~$\Amcm$.
The habitat graph generation is implemented in Python 3. 
The matching is computed using Kolmogorov's~\cite{Kolmogorov09} C++ implementation of the Blossom V algorithm.

For the cycle instances generated the habitat graph in Python 3 and used Gurobi 9.5.0 to solve ILP formulation~\eqref{eq:A}.
We call the respective solver~$\Amchm$.

The generic solver~$\Agen$ can solve all instances of \RwgbpAcr{} and uses Gurobi 9.5.0 to solve the following ILP formulation
with an exponential number of constraints.
\begin{equation*}
\begin{aligned}
 \min && \sum_{e\in E(G)} c(e)\cdot x_e \\
 \text{s.t.} && x_e&\in\{0,1\} && \forall e\in E(G) \\
 && \sum_{e\in \delta_H(S)} x_e &\geq 1 && \forall \emptyset\neq S\subsetneq H,\, \forall H\in\calH
\end{aligned}
\end{equation*}
By~$\delta_H(S)\ceq \{e\in E\mid e\cap S\neq \emptyset \land e\cap (H\setminus S)\neq \emptyset\}$,
we denote for a graph~$G = (V, E)$,
vertex set~$H\subseteq V$,
and subset~$\emptyset\neq S\subsetneq H$,
the set of edges between~$S$ and~$H\setminus S$.

The approximate solver~$\Aapx$, implemented in Python 3, is a weighted adaption of the $O(r)$-approximation algorithm for \RgbpAcr{} given by~\citet{FK21}.
Their algorithm computes for every habitat a spanning tree,
and then combines the solutions.
The weighted adaption has the same approximation guarantee.
Further, for induced cycles it has an additive approximation guarantee that depends on the number of habitats and the maximum cost of any edge.

\begin{observation}%
 \label{obs:approxguarantee}
 $\Aapx$ is an additive~$(r \cdot c_\mathrm{max})$-ap\-prox\-i\-ma\-tion for \RwgbpAcr{}
 with each habitat inducing a cycle,
 where~$c_\mathrm{max} = \max_{e \in E} c(e)$.
\end{observation}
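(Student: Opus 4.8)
The plan is to sandwich both $\OPT$ and the cost of $\Aapx$ between a common reference solution: the ``take everything'' solution that keeps every edge induced by a habitat. Write $S \ceq \bigcup_{i=1}^{r} E(G[H_i])$ for the set of all cycle edges. Since each $H_i$ induces a cycle, $E(G[H_i])$ already connects $H_i$, so $S$ is a feasible solution of cost $c(S)$, and every cycle contributes at most one ``saveable'' edge on top of what is strictly needed.

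First I would pin down the output of $\Aapx$ on such instances. For a habitat inducing a cycle, a spanning tree is that cycle minus one edge, so the subsolution computed for $H_i$ is a subset of $E(G[H_i]) \subseteq S$. Taking the union over all habitats, the returned edge set $F_{\mathrm{apx}}$ is feasible and satisfies $F_{\mathrm{apx}} \subseteq S$; in particular $c(F_{\mathrm{apx}}) \le c(S)$. Note that this upper bound does not even use which edge is dropped—only that each per-habitat subsolution stays inside its cycle.

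Next I would lower-bound $\OPT$. Let $F^\star$ be an optimal solution; replacing it by $F^\star \cap S$ preserves feasibility, because $G[F][H_i]$ has edge set $F \cap E(G[H_i])$ and $E(G[H_i]) \subseteq S$, and it cannot increase the cost, so I may assume $F^\star \subseteq S$. For each $i$, the graph $G[F^\star][H_i]$ is a connected spanning subgraph of the cycle $G[H_i]$ and therefore omits at most one of its edges. Hence $S \setminus F^\star$ contains at most one edge per habitat, each of cost at most $c_{\mathrm{max}}$, giving $c(S) - \OPT = c(S \setminus F^\star) \le r \cdot c_{\mathrm{max}}$. Combining the two estimates yields $c(F_{\mathrm{apx}}) - \OPT \le c(S) - \OPT \le r \cdot c_{\mathrm{max}}$, which is exactly the claimed additive guarantee.

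The only genuinely delicate point is the counting in the lower bound: an edge shared by several habitats could be the omitted edge of more than one of them, so summing ``one dropped edge per habitat'' would overcount the saved weight. The fix is to bound the number of \emph{distinct} omitted edges, i.e.\ $|S \setminus F^\star|$, by the number of habitats rather than by a per-habitat sum. Equivalently, by \cref{lem:matchsol}(ii) the omitted edges correspond to a matching in the habitat graph, which—since every hyperedge meets one of the $r$ habitat vertices—has at most $r$ edges of weight at most $c_{\mathrm{max}}$ each; this gives the same bound more structurally. Everything else, namely feasibility of $F_{\mathrm{apx}}$ and the reduction to $F^\star \subseteq S$, is routine.
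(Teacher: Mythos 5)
Your proof is correct and rests on the same key fact as the paper's: every feasible solution omits at most one edge per cycle habitat, each of cost at most $c_{\mathrm{max}}$. The paper compares the algorithm's per-habitat trees $F_i$ directly with the pieces $F_i^*$ of an optimal solution (writing $\{e_i'\}=F_i\setminus F_i^*$), whereas you route both bounds through the ``all cycle edges'' set $S=\bigcup_i E(G[H_i])$; this is only a bookkeeping difference, and your version in fact treats the singleton assumption and the overcounting of shared omitted edges more explicitly than the paper does.
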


\appendixproof{obs:approxguarantee}
{
\begin{proof}
 Let~$F_1,\dots,F_r$ and~$F\ceq \bigcup_{i=1}^r F_i$ denote the solution of the approximation algorithm,
 and let~$F^*=\bigcup_{i=1}^r F_i^*$ denote any optimal solution.
 For each~$i\in\set{r}$,
 let~$\{e_i'\}=F_i\setminus F_i^*$.
 Then:
 \begin{align*}
  c(F) &= c(F_1\cup\dots\cup F_r) 
  \\
  &\leq c(F_1^*\cup\{e_1'\}\cup\dots\cup F_r^*\cup\{e_r'\})
  \\
  &\leq c(F_1^*\cup\dots\cup F_r^*) + \sum_{i=1}^r c(e_i') 
  \\
  &\leq c(F^*)+r\cdot c_\mathrm{max}. \qedhere
 \end{align*}
\end{proof}
}
\subsection{Results}
\label{ssec:results}

We compared the implementations on machines equipped with an Intel Xeon W-2125 CPU and 256GB of RAM running Ubuntu 18.04.
For the ILP-based solvers, we set a time limit of 30s for the solving time (not the build time).
For 43 of the 100 artificial faces instances $\Agen$ was not able to compute any feasible solution. 
For all remaining instances, 
$\Agen$ provided an optimal solution in the given time limit.

All material to reproduce the results is provided in the supplementary material.

\paragraph{Comparison of the optimal solvers.}

Our experiments underline 
that each specialized solver outperforms
the next less specialized solver
(see~\cref{fig:Best}(a)).
For instance,
on real-world instances with face habitats,
$\Amcm$ is on average $1.7$ times faster than~$\Amchm$,
and on artificial instances with cycle habitats,
$\Amcm$ is on average $82$ times faster than~$\Agen$
(see~\cref{tab:summary_best}).
\begin{table}[t]\centering
 \caption{Summary of our results regarding the running time ratio of the respective two best exact algorithms.}
 \begin{tabular}{@{}r|rrrr@{}}\toprule
  & \multicolumn{4}{c}{running time ratio} \\
  & min & max & mean & sd \\
  \midrule\midrule
  Faces$_{\mathrm{Art}}$ & 1.578 & 4.424 & 2.382 & 0.709\\
  Faces$_{\mathrm{Real}}$ & 0.76 & 9.566 & 1.756 & 0.572\\
  \midrule
  Cycles$_{\mathrm{Art}}$ & 1.35 & 645.87 & 82.933 & 137.271\\
  Cycles$_{\mathrm{Real}}$ & 0.834 & 2266.312 & 315.597 & 483.523\\
  \bottomrule
 \end{tabular}
 \label{tab:summary_best}
\end{table}
Moreover,
$\Amcm$ is $1.5$~times faster than~$\Amchm$ on 80\% of the face instances,
and $\Amchm$ is $10$~times faster than~$\Agen$ on 76\% of the cycle instances.

\paragraph{Approximate solver.}
On real-world instances, $\Aapx$ is
$2$ times faster than $\Amcm$ on face instances and
$4$ times faster than $\Amchm$ on cycle instances,
whereas the approximation factor never exceeds~$1.348$.
The additive error is
significantly better than the theoretical bound in~\cref{obs:approxguarantee}.
See \cref{tab:sumamry_apx} and~\cref{fig:Best}(b).
On the artificial instances,
the approximation ratios are even better on average.
This may be due to the fact that these instances are rather sparse
(see~\cref{tab:summary_prop}).

\smallskip
\noindent\emph{Intersection.}
We also considered the intersection rate $\lambda \ceq |\sum_{H\in\calH} |E(G[H])|/|\bigcup_{H\in\calH} |E(G[H])|$, 
which measures the average number of habitats in each edge
(see \cref{fig:apx-cost-vs-intersect,fig:apx-time-vs-intersect} in the appendix for a comparison of the solution quality and the speedup factor of $\Aapx$).
For~$\lambda \ge 10$ the approximation quality improves slightly in the real-world cycle and walk instances.
As the habitats lie more dense, it is more likely for an edge to be in the solution.
It thus seems plausible that~$\Aapx$ chooses fewer unnecessary edges.

As for the running time, the intersection rate seems only to have an effect on $\Amchm$.
Especially on real-world cycle instances one can see that the running time quotient of $\Amchm$ and~$\Aapx$ decreases with growing~$\lambda$.
This is likely due to the habitat graph~$B$:
If there are less edges that are in a unique habitat,
then there are less vertices in $B$ that are incident to a single cardinality-two edge,
and hence less variables and constraints in~\eqref{eq:A}.
Note that one can deal with such vertices in a preprocessing routine as proposed i.e.\ by~\citet{KKNNZ21}.
This may significantly improve the running times of~$\Amcm$ and~$\Amchm$ for instances with low intersection rate.

\section{Conclusion}

While we prove that when every habitat induces a cycle,
\RwgbpAcr{} remains \NP-hard, even on planar graphs and graphs of small maximum degree,
we provide an ILP-based solver that 
performs exceptionally well in our experiments.
Moreover, when each habitat additionally induces a face in a given plane graph,
the problem becomes solvable in polynomial time,
with even faster practical running times.
Lastly we observe that
the approximation algorithm by~\citet{FK21}
performs well in our experiments.

In a long version of this paper, 
we wish to address several theoretical and experimental tasks.
On the experimental side,
we plan to test our code on larger input instances.
As mentioned in \cref{ssec:results}, we believe that the implementation of perprocessing routines may improve the running times of $\Amcm$ and~$\Amchm$ significantly.
On the theoretical side,
we plan to settle the computational complexity of \RwgbpAcr{} 
for the following cases:
for habitats in~$\{P_2,C_3\}$ in planar graphs or with constant maximum degree at least~$4$;
for habitats in~$\{P_2,C_\ell\}$ in graphs of maximum degree at most~$3$.
Future work
may include restrictions on the habitats other than induced cycles.

{\begingroup
  \let\clearpage\relax
  \renewcommand{\url}[1]{\href{#1}{$\ExternalLink$}}
  \renewcommand{\doi}[1]{}
  \bibliography{gbp-bib}

\newcommand{\noopsort}[1]{}
\begin{thebibliography}{23}
\providecommand{\natexlab}[1]{#1}
\providecommand{\url}[1]{\texttt{#1}}
\expandafter\ifx\csname urlstyle\endcsname\relax
  \providecommand{\doi}[1]{doi: #1}\else
  \providecommand{\doi}{doi: \begingroup \urlstyle{rm}\Url}\fi

\bibitem[Ament et~al.(2014)Ament, Callahan, McClure, Reuling, and
  Tabor]{Ament15}
R.~Ament, R.~Callahan, M.~McClure, M.~Reuling, and G.~Tabor.
\newblock Wildlife connectivity: Fundamentals for conservation action, 2014.
\newblock Center for Large Landscape Conservation: Bozeman, Montana.

\bibitem[Bastille-Rousseau et~al.(2018)Bastille-Rousseau, Wall,
  Douglas-Hamilton, and Wittemyer]{BastilleWDW18}
Guillaume Bastille-Rousseau, Jake Wall, Iain Douglas-Hamilton, and George
  Wittemyer.
\newblock Optimizing the positioning of wildlife crossing structures using gps
  telemetry.
\newblock \emph{Journal of Applied Ecology}, 55\penalty0 (4):\penalty0
  2055--2063, 2018.
\newblock URL \url{https://doi.org/10.1111/1365-2664.13117}.

\bibitem[Bennett(2017)]{bennett2017effects}
Victoria~J Bennett.
\newblock Effects of road density and pattern on the conservation of species
  and biodiversity.
\newblock \emph{Current Landscape Ecology Reports}, 2\penalty0 (1):\penalty0
  1--11, 2017.

\bibitem[Clevenger et~al.(2002)Clevenger, Wierzchowski, Chruszcz, and
  Gunson]{clevenger2002gis}
Anthony~P Clevenger, Jack Wierzchowski, Bryan Chruszcz, and Kari Gunson.
\newblock Gis-generated, expert-based models for identifying wildlife habitat
  linkages and planning mitigation passages.
\newblock \emph{Conservation biology}, 16\penalty0 (2):\penalty0 503--514,
  2002.

\bibitem[Downs et~al.(2014)Downs, Horner, Loraamm, Anderson, Kim, and
  Onorato]{DownsHLAKO14}
Joni~A. Downs, Mark~W. Horner, Rebecca~W. Loraamm, James Anderson, Hyun Kim,
  and Dave Onorato.
\newblock Strategically locating wildlife crossing structures for {F}lorida
  panthers using maximal covering approaches.
\newblock \emph{Trans. {GIS}}, 18\penalty0 (1):\penalty0 46--65, 2014.
\newblock URL \url{https://doi.org/10.1111/tgis.12005}.

\bibitem[Fluschnik and Kellerhals(2021)]{FK21}
Till Fluschnik and Leon Kellerhals.
\newblock Placing green bridges optimally, with a multivariate analysis.
\newblock In \emph{Proceedings of the 17th Conference on Computability in
  Europe -- Connecting with Computability (CiE~'21)}, pages 204--216, 2021.
\newblock URL \url{https://doi.org/10.1007/978-3-030-80049-9\_19}.

\bibitem[Gabow(1990)]{Gabow90}
Harold~N. Gabow.
\newblock Data structures for weighted matching and nearest common ancestors
  with linking.
\newblock In \emph{Proceedings of the 1st Symposium on Discrete Algorithms
  (SODA~'90)}, pages 434--443, 1990.
\newblock URL \url{http://dl.acm.org/citation.cfm?id=320176.320229}.

\bibitem[Galpern et~al.(2011)Galpern, Manseau, and Fall]{galpern2011patch}
Paul Galpern, Micheline Manseau, and Andrew Fall.
\newblock Patch-based graphs of landscape connectivity: a guide to
  construction, analysis and application for conservation.
\newblock \emph{Biological conservation}, 144\penalty0 (1):\penalty0 44--55,
  2011.

\bibitem[Garey and Johnson(1979)]{GareyJ79}
M.~R. Garey and David~S. Johnson.
\newblock \emph{Computers and Intractability: {A} Guide to the Theory of
  NP-Completeness}.
\newblock W. H. Freeman, 1979.
\newblock ISBN 0-7167-1044-7.

\bibitem[Goldingay and Taylor(2017)]{goldingay2017can}
Ross~L Goldingay and Brendan~D Taylor.
\newblock Can field trials improve the design of road-crossing structures for
  gliding mammals?
\newblock \emph{Ecological Research}, 32\penalty0 (5):\penalty0 743--749, 2017.

\bibitem[Hamer and McDonnell(2008)]{HammerM08}
Andrew~J. Hamer and Mark~J. McDonnell.
\newblock Amphibian ecology and conservation in the urbanising world: A review.
\newblock \emph{Biological Conservation}, 141\penalty0 (10):\penalty0
  2432--2449, 2008.
\newblock ISSN 0006-3207.
\newblock URL \url{https://doi.org/10.1016/j.biocon.2008.07.020}.

\bibitem[Hennings and Soll(2010)]{hennings2010wildlife}
Lori~A Hennings and Jonathan~Andrew Soll.
\newblock \emph{Wildlife corridors and permeability: a literature review}.
\newblock Metro Sustainability Center, 2010.

\bibitem[Huijser et~al.(2008)Huijser, McGowen, Hardy, Kociolek, Clevenger,
  Smith, and Ament]{HuijserMHKCSA08}
Marcel~P. Huijser, Pat~T. McGowen, Amanda Hardy, Angela Kociolek, Anthony~P.
  Clevenger, Dan Smith, and Robert~J. Ament.
\newblock Wildlife-vehicle collision reduction study: Report to congress.
\newblock 2008.
\newblock URL
  \url{https://www.fhwa.dot.gov/publications/research/safety/08034/08034.pdf}.

\bibitem[Huijser et~al.(2009)Huijser, Duffield, Clevenger, Ament, and
  McGowen]{HuijserDCAM09}
Marcel~P. Huijser, John~W. Duffield, Anthony~P. Clevenger, Robert~J. Ament, and
  Pat~T. McGowen.
\newblock Cost–benefit analyses of mitigation measures aimed at reducing
  collisions with large ungulates in the united states and canada: a decision
  support tool.
\newblock \emph{Ecology and Society}, 14\penalty0 (2), 2009.
\newblock ISSN 17083087.
\newblock URL \url{http://www.jstor.org/stable/26268301}.

\bibitem[Jord{\'{a}}n and Schlotter(2015)]{JordanS15}
Tibor Jord{\'{a}}n and Ildik{\'{o}} Schlotter.
\newblock Parameterized complexity of spare capacity allocation and the
  multicost steiner subgraph problem.
\newblock \emph{J. Discrete Algorithms}, 30:\penalty0 29--44, 2015.
\newblock URL \url{https://doi.org/10.1016/j.jda.2014.11.005}.

\bibitem[Koana et~al.(2021)Koana, Korenwein, Nichterlein, Niedermeier, and
  Zschoche]{KKNNZ21}
Tomohiro Koana, Viatcheslav Korenwein, Andr\'{e} Nichterlein, Rolf Niedermeier,
  and Philipp Zschoche.
\newblock Data reduction for maximum matching on real-world graphs: Theory and
  experiments.
\newblock \emph{ACM J. Exp. Algorithmics}, 26:\penalty0 3.1--3.30, 2021.
\newblock URL \url{https://doi.org/10.1145/3439801}.

\bibitem[Kolmogorov(2009)]{Kolmogorov09}
Vladimir Kolmogorov.
\newblock Blossom {V:} a new implementation of a minimum cost perfect matching
  algorithm.
\newblock \emph{Math. Program. Comput.}, 1\penalty0 (1):\penalty0 43--67, 2009.
\newblock URL \url{https://doi.org/10.1007/s12532-009-0002-8}.

\bibitem[Lai et~al.(2011)Lai, Gomes, Schwartz, McKelvey, Calkin, and
  Montgomery]{LaiGSMCM11}
Katherine~J. Lai, Carla~P. Gomes, Michael~K. Schwartz, Kevin~S. McKelvey,
  David~E. Calkin, and Claire~A. Montgomery.
\newblock The steiner multigraph problem: Wildlife corridor design for multiple
  species.
\newblock In \emph{Proc.\ of 25th {AAAI}}. {AAAI} Press, 2011.
\newblock URL
  \url{http://www.aaai.org/ocs/index.php/AAAI/AAAI11/paper/view/3768}.

\bibitem[Loraamm and Downs(2016)]{LoraammD16}
Rebecca~W. Loraamm and Joni~A. Downs.
\newblock A wildlife movement approach to optimally locate wildlife crossing
  structures.
\newblock \emph{Int. J. Geogr. Inf. Sci.}, 30\penalty0 (1):\penalty0 74--88,
  2016.
\newblock URL \url{https://doi.org/10.1080/13658816.2015.1083995}.

\bibitem[Sawaya et~al.(2014)Sawaya, Kalinowski, and Clevenger]{SawayaKC14}
Michael~A. Sawaya, Steven~T. Kalinowski, and Anthony~P. Clevenger.
\newblock Genetic connectivity for two bear species at wildlife crossing
  structures in banff national park.
\newblock \emph{Proceedings of the Royal Society B: Biological Sciences},
  281\penalty0 (1780):\penalty0 20131705, 2014.
\newblock URL \url{https://www.ncbi.nlm.nih.gov/pmc/articles/PMC4027379/}.

\bibitem[Toussaint(1980)]{Toussaint80}
Godfried~T. Toussaint.
\newblock The relative neighbourhood graph of a finite planar set.
\newblock \emph{Pattern Recognit.}, 12\penalty0 (4):\penalty0 261--268, 1980.
\newblock URL \url{https://doi.org/10.1016/0031-3203(80)90066-7}.

\bibitem[Urban et~al.(2009)Urban, Minor, Treml, and Schick]{UrbanMTS09}
Dean~L. Urban, Emily~S. Minor, Eric~A. Treml, and Robert~S. Schick.
\newblock Graph models of habitat mosaics.
\newblock \emph{Ecology Letters}, 12\penalty0 (3):\penalty0 260--273, 2009.
\newblock URL \url{https://doi.org/10.1111/j.1461-0248.2008.01271.x}.

\bibitem[Woltz et~al.(2008)Woltz, Gibbs, and Ducey]{WoltzGD08}
Hara~W. Woltz, James~P. Gibbs, and Peter~K. Ducey.
\newblock Road crossing structures for amphibians and reptiles: Informing
  design through behavioral analysis.
\newblock \emph{Biological Conservation}, 141\penalty0 (11):\penalty0
  2745--2750, 2008.
\newblock ISSN 0006-3207.
\newblock URL \url{https://doi.org/10.1016/j.biocon.2008.08.010}.

\end{thebibliography}
\endgroup}

\appendix
\renewcommand\thefigure{\thesection.\arabic{figure}}
\renewcommand\thetable{\thesection.\arabic{table}}
\section*{\Large Appendix}
\section{Additional figures and tables}
We provide additional figures and tables for a more in-depth overview over our results.
In~\cref{fig:vsilp2},
we show how~$\Amcm$ and~$\Amchm$ perform (in terms of time) against~$\Agen$.
Interestingly,
is some few cases for real-world instances with 200 cycle habitats,
$\Agen$ runs faster than~$\Amchm$.
In~\cref{fig:approxvsopt},
we show how~$\Aapx$ performs (in terms of solution quality) against OPT.
One can see that there are no big fluctuation from the diagonal (approximation ratio 1).
Moreover,
larger solution costs and large set of habitats increases the approximation ratio.

\begin{figure*}[t]
    \includegraphics[width=\textwidth]{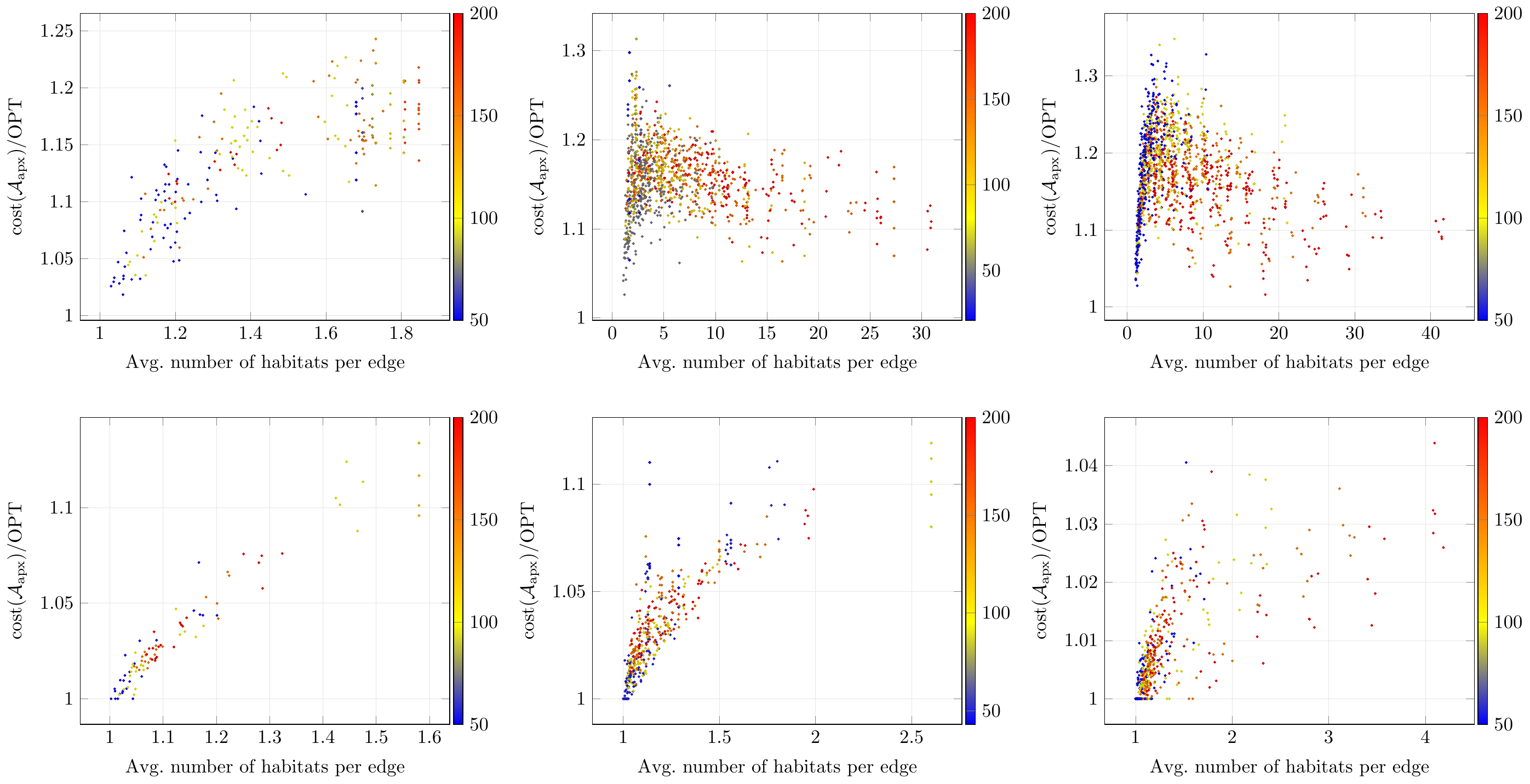}
    \caption{
    Approximation algorithm ratio against our intersection measure.
    (Top) Real-world. 
    (Bottom) Artificial. 
    (Left) Faces. 
    (Middle) Cycles. 
    (Right) Random Walks.
    }
    \label{fig:apx-cost-vs-intersect}
\end{figure*}

\begin{figure*}[t]
    \includegraphics[width=\textwidth]{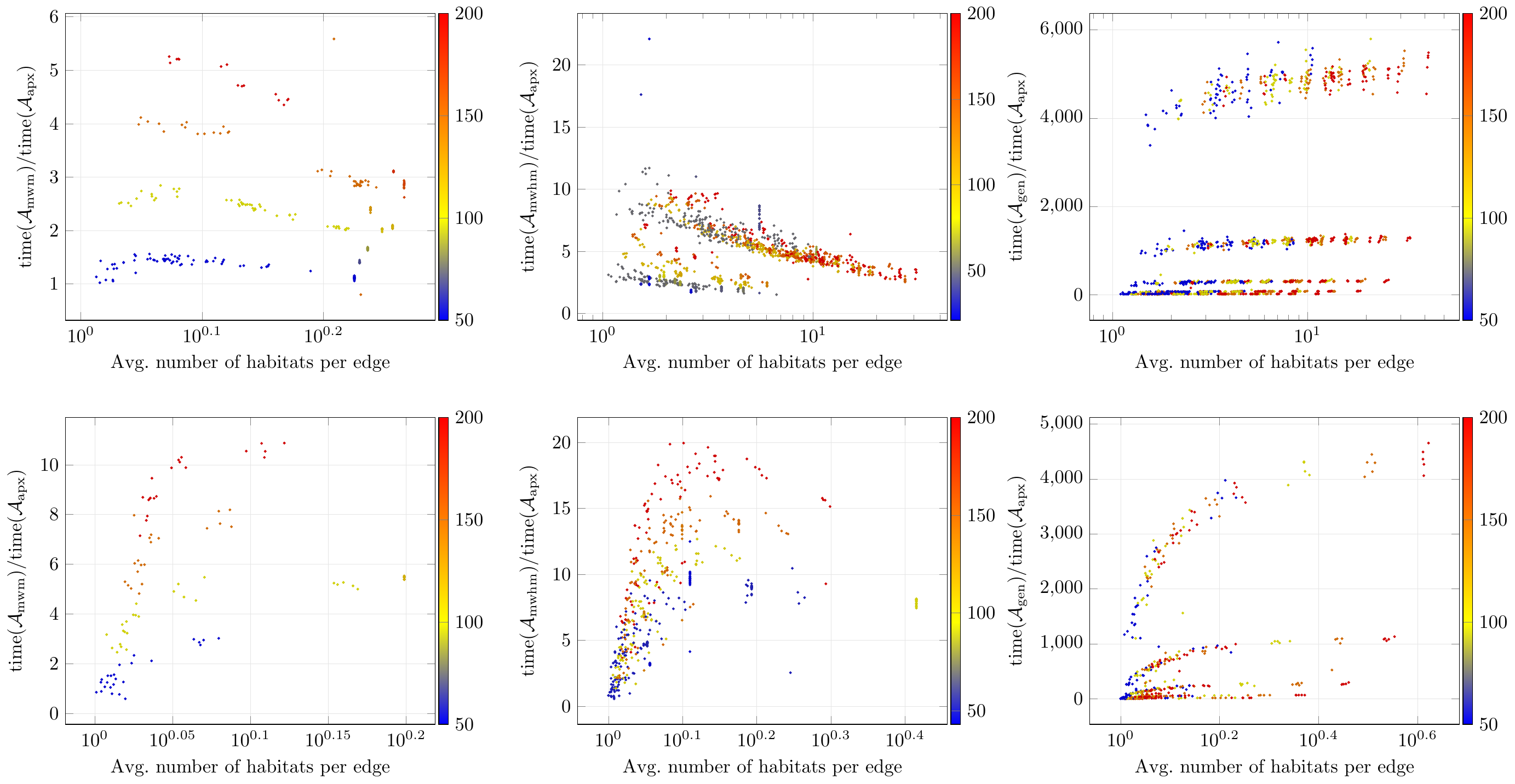}
    \caption{
    Quotient of running times of best optimal algorithm and approximation algorithm
    against our intersection measure.
    (Top) Real-world. 
    (Bottom) Artificial. 
    (Left) Faces. 
    (Middle) Cycles. 
    (Right) Random Walks.
    }
    \label{fig:apx-time-vs-intersect}
\end{figure*}

\begin{table*}[t]\centering
 \caption{Summary of our results regarding Faces on NW. 
 ``btime'' is short for ``building time''.}
 \begin{tabular}{@{}r||r|rrr|rrrrr@{}}\toprule
	 & \small intersect & \small OPT & \small cost($\Aapx$) & \small ratio & \small time($\Aapx$) & \small time($\Amcm$) & \small time($\Amchm$) & \small time($\Agen$) & \small btime($\Agen$)\\\midrule\midrule
 50 & 1.048 & 401.8 & 415.4 & 1.034 & 0.004 & 0.004 & 0.008 & 0.022 & 0.016\\
 100  & 1.092 & 762 & 800.4 & 1.05 & 0.006 & 0.014 & 0.023 & 0.037 & 0.031\\
 150  & 1.14 & 1103.6 & 1196.2 & 1.084 & 0.007 & 0.03 & 0.046 & 0.052 & 0.044\\
 200  & 1.195 & 1381.8 & 1536.8 & 1.112 & 0.009 & 0.049 & 0.071 & 0.063 & 0.054\\
  \bottomrule
 \end{tabular}
 \label{tab:summary_nw}
\end{table*}

\begin{figure*}[t]
    \includegraphics[width=1\textwidth]{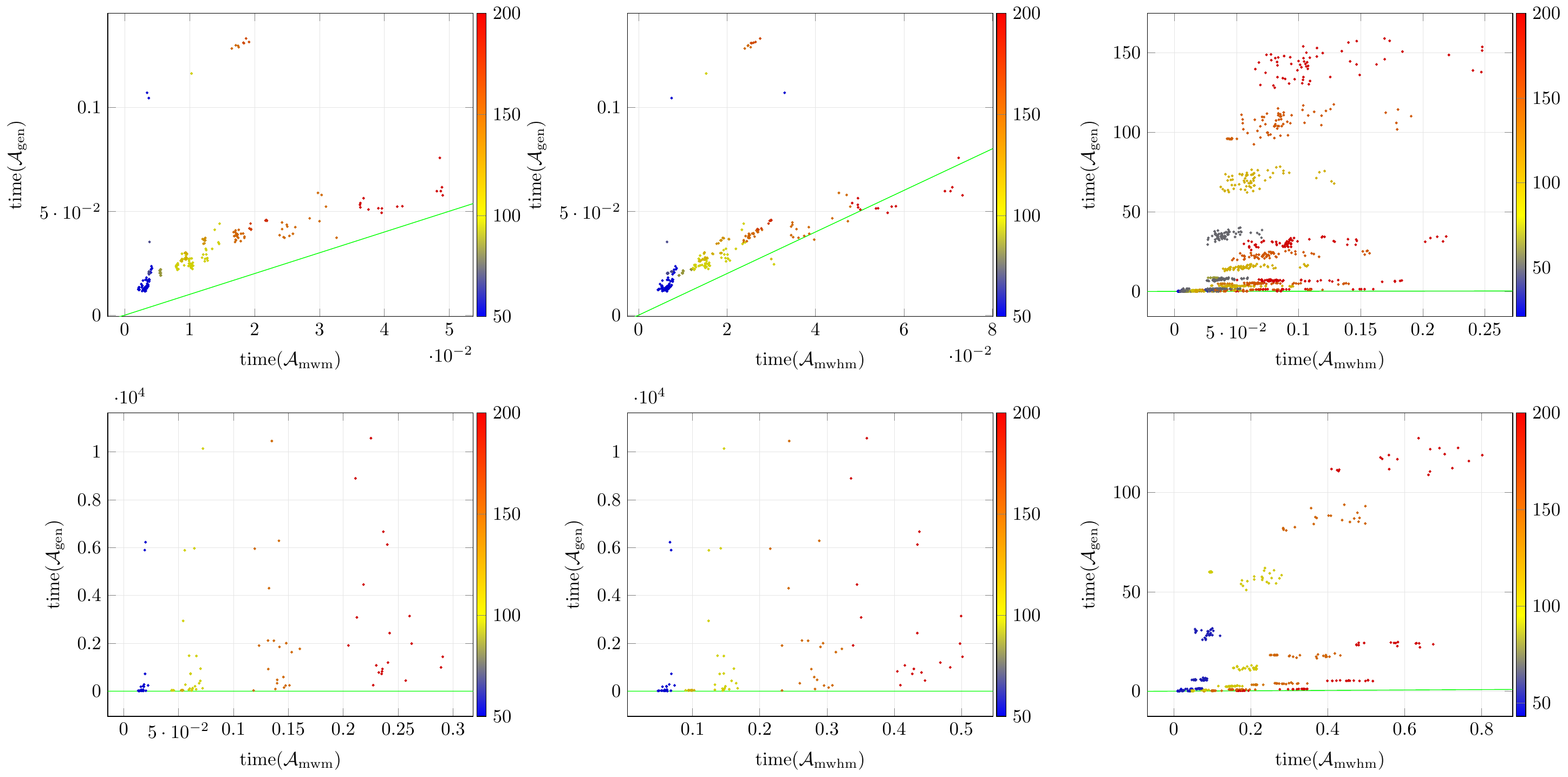}
    \caption{Comparison of the running times of $\Amcm$ and~$\Amchm$ versus~$\Agen$.
    (Top) Real-world. 
    (Bottom) Artificial. 
    (Left and middle) Faces instances. 
    (Right) Cycles instances.}
    \label{fig:vsilp2}
\end{figure*}

\begin{figure*}[t]
    \includegraphics[width=1\textwidth]{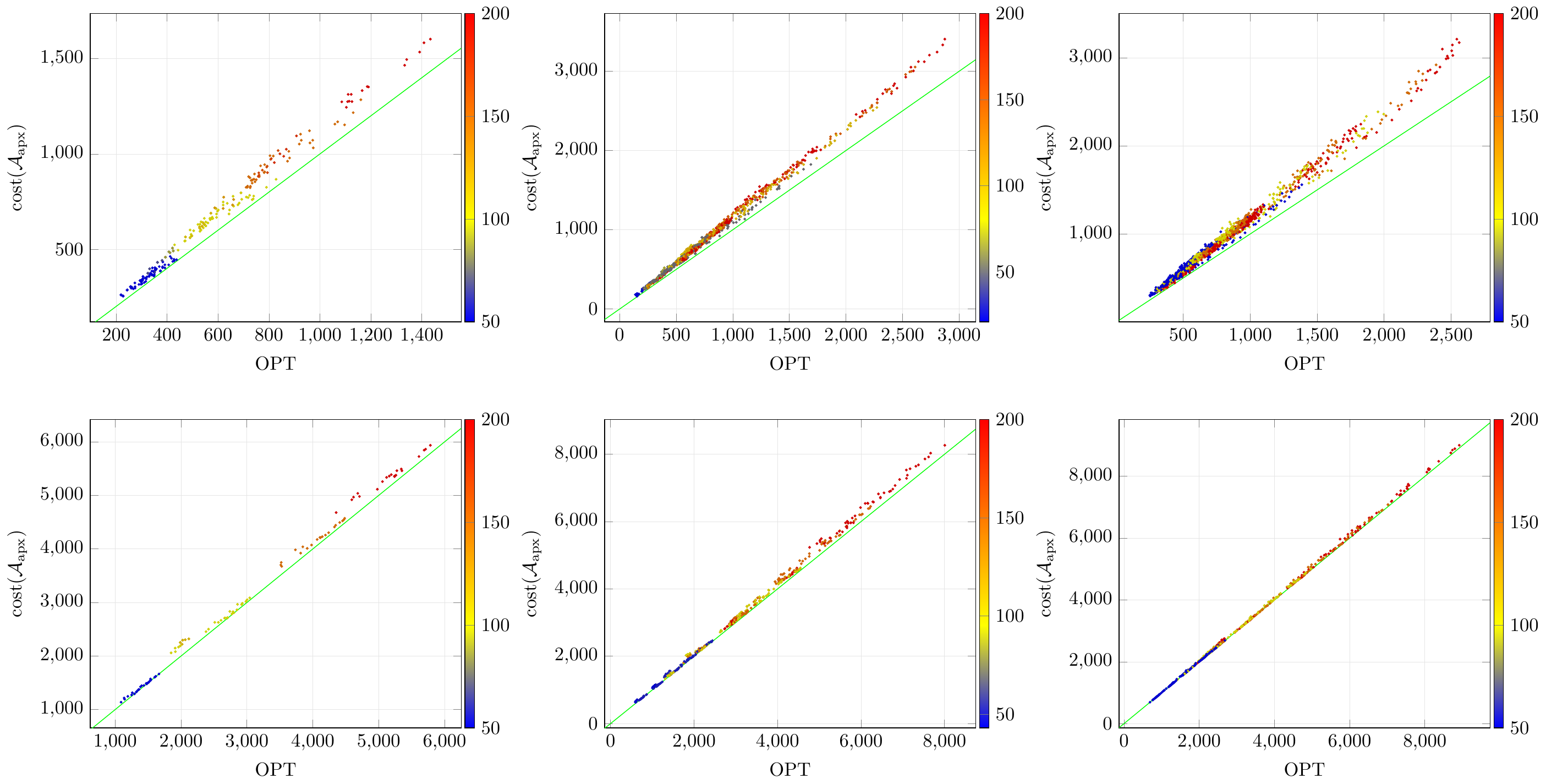}
    \caption{Approximation solution against OPT.
    (Top) Real-world. 
    (Bottom) Artificial. 
    (Left) Faces. 
    (Middle) Cycles. 
    (Right) Random Walks.}
    \label{fig:approxvsopt}
\end{figure*}

\ifshort{}
\appendixProofText
\fi{}

\end{document}